 \newcommand{\R}{\mathbb{R}}
 \newcommand{\figref}[1]{Fig.~\ref{fig:#1}}
 \newcommand{\resubb}[1]{{#1}}
 \newcommand{\resub}[1]{{#1}}
\DeclareMathOperator*{\argmin}{arg\,min}
\setlist{itemsep=0pt,leftmargin=*}
\newtheorem*{remark}{Remark}
\newcommand{\singularvalues}{\boldsymbol \sigma}
\newcommand{\cmark}{\ding{51}}%
\newcommand{\xmark}{\ding{55}}%
  \providecommand\BibTeX{{%
    \normalfont B\kern-0.5em{\scshape i\kern-0.25em b}\kern-0.8em\TeX}}}
\begin{document}

\title{Symmetric Volume Maps:
Order-Invariant Volumetric Mesh Correspondence with Free Boundary}
\author{S. Mazdak Abulnaga}
\email{abulnaga@mit.edu}
\affiliation{%
  \institution{Massachusetts Institute of Technology}
  \streetaddress{77 Massachusetts Ave}
  \city{Cambridge}
  \state{Massachusetts}
  \country{USA}
  \postcode{02139}
}
\author{Oded Stein}
\email{ostein@mit.edu}
\affiliation{%
  \institution{Massachusetts Institute of Technology}
  \streetaddress{77 Massachusetts Ave}
  \city{Cambridge}
  \state{Massachusetts}
  \country{USA}
  \postcode{02139}
}
\author{Polina Golland}
\email{polina@csail.mit.edu}
\affiliation{%
  \institution{Massachusetts Institute of Technology}
  \streetaddress{77 Massachusetts Ave}
  \city{Cambridge}
  \state{Massachusetts}
  \country{USA}
  \postcode{02139}
}
\author{Justin Solomon}
\email{jsolomon@mit.edu}
\affiliation{%
  \institution{Massachusetts Institute of Technology}
  \streetaddress{77 Massachusetts Ave}
  \city{Cambridge}
  \state{Massachusetts}
  \country{USA}
  \postcode{02139}
}


\begin{abstract}
Although shape correspondence is a central problem in geometry processing, most methods for this task apply only to two-dimensional surfaces.  The neglected task of \emph{volumetric} correspondence---a natural extension relevant to shapes extracted from simulation, medical imaging, and volume rendering---presents unique challenges that do not appear in the two-dimensional case. In this work, we propose a method for mapping between volumes represented as tetrahedral meshes. Our formulation minimizes a distortion energy designed to extract maps symmetrically, i.e., without dependence on the ordering of the source and target domains. We accompany our method with theoretical discussion describing the consequences of this symmetry assumption, leading us to select a symmetrized ARAP energy that favors isometric correspondences. Our final formulation optimizes for near-isometry while matching the boundary. We demonstrate our method on a diverse geometric dataset, producing low-distortion matchings that align \resub{closely} to the boundary. 


\end{abstract}

\begin{CCSXML}
<ccs2012>
   <concept>
       <concept_id>10010147.10010371.10010396.10010401</concept_id>
       <concept_desc>Computing methodologies~Volumetric models</concept_desc>
       <concept_significance>500</concept_significance>
       </concept>
   <concept>
       <concept_id>10010147.10010371.10010396.10010402</concept_id>
       <concept_desc>Computing methodologies~Shape analysis</concept_desc>
       <concept_significance>500</concept_significance>
       </concept>
 </ccs2012>
\end{CCSXML}

\ccsdesc[500]{Computing methodologies~Volumetric models}
\ccsdesc[500]{Computing methodologies~Shape analysis}

\keywords{correspondence, volumes, tetrahedral meshes, as-rigid-as-possible, symmetry}


\maketitle

\section{Introduction}

  \begin{figure}
    \includegraphics[width=\linewidth]{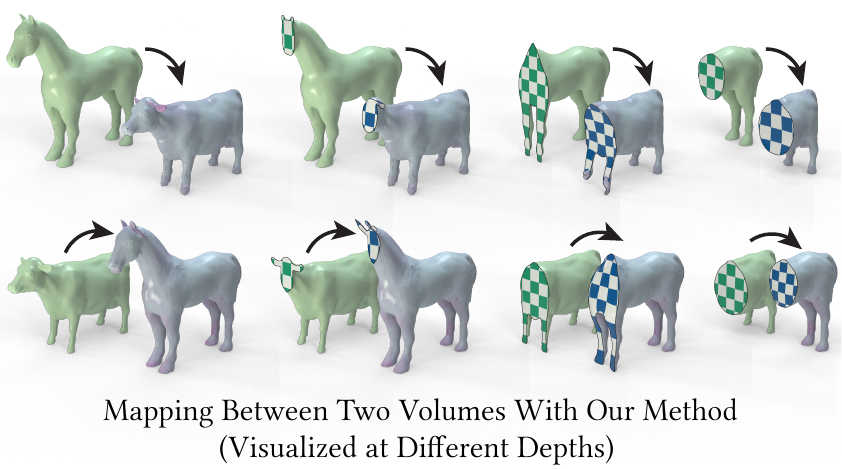}
    \caption{Our method produces low-distortion correspondences between volumes, visualized as checkerboard textures through the sliced volumes.
    \label{fig:teaser}\vspace{-.2in}
    }
  \end{figure}

Shape correspondences are at the core of many applications in graphics and geometry processing, including texture and segmentation transfer, animation, and statistical shape analysis. The central objective of these applications is to compute a dense map between two input shapes, facilitating semantically-meaningful information transfer with minimal distortion. 

The vast majority of shape correspondence algorithms focus on mapping two-dimensional surfaces. These approaches leverage geometric properties that are unique to surfaces. For example, key shape properties like curvature are defined over the entire surface domain, rather than only on the boundary as in the volumetric case. As a result, one can even find reasonable correspondences by matching geometric features directly, without incorporating any notion of distortion~\cite{ovsjanikov2010one}. Other methods use Tutte's embedding or notions of discrete conformality specific to surfaces to achieve key properties like invertibility~\cite{lipman2009mobius,schmidt2019distortion}. 

In contrast, here we consider the problem of mapping \emph{volumes} to \emph{volumes} rather than surfaces to surfaces. Volumetric correspondence is beneficial for several tasks. In graphics and CAD, boundary representations of shapes are used to represent objects, so even the input geometry used to evaluate surface-to-surface mapping techniques typically expresses a volumetric domain. Hence, finding volumetric correspondences may improve correspondences of these boundary representations, since volumetric reasoning is needed
\resub{to preserve thin features and prevent volumetric collapse; for example to prevent the candy wrapper artifact, where regions twist about a point and change orientation. In these cases, surface area is roughly maintained while volume degenerates. See Fig.~\ref{fig:didactic} (top) for an illustration using the surface mapping approach of~\citet{ezuz2019reversible}}. \resubb{ From a surface isometry perspective, the candy wrapper artifact has little distortion as only few edges have deformed. However, from a volumetric perspective, the shape's volume has completely degenerated.} \resubb{In other applications, such as medical imaging, data is acquired in a regular 3D grid and shape correspondence is used for volumetric texture transfer or alignment. Consequently, extending surface correspondences to the interior of volumetric shapes is nontrivial, so volumetric mapping approaches are needed.  }

Volumes do not share many of the geometric properties that have enabled mapping techniques for surfaces, so new approaches are needed. 
The closest existing methods to volumetric mapping tackle volumetric deformation and parameterization. In these applications, one starts with a volume in its rest pose and deforms the volume to a target domain or to conform to a set of target handle positions in a fashion that minimizes distortion.  
%
These approaches differ from volumetric mapping in several  ways. 
First, volumetric deformation and parameterization methods typically assume a reasonable initial guess (e.g., the source shape) and flexibility in the target domain (e.g., unconstrained geometry away from the handles) or specialize to a single target (e.g., a ball).  
\resubb{In contrast, in mapping, the source and target domains are geometrically distinct shapes so a reasonable initialization is not given. One may need to start with a coarse map to a known set of landmarks~\cite{aigerman2014lifted,ezuz2019reversible}}.
%
Furthermore, mapping problems are typically symmetric, in the sense
that the computed map should be invariant to the ordering of the source and target domains; there is no notion of a ``rest pose'' typical in deformation. Consequently, we seek a distortion energy that is symmetric with respect to the source and target. 

We propose an algorithm for mapping between volumes represented as tetrahedral meshes. Our method draws insight from 2D surface mapping and 3D deformation. 
%
It builds on the discretization of maps used in a state-of-the-art surface mapping algorithm~\cite{ezuz2019reversible} but requires new objective functions and optimization methods to be effective. 
In particular, we propose a set of \textit{symmetrized} distortion energies that are invariant to the domain over which the map is applied and aim to produce inversion-free, low-distortion matchings that conform closely to the boundary
(\figref{teaser}).

\begin{figure}
    \includegraphics[width=0.9\linewidth]{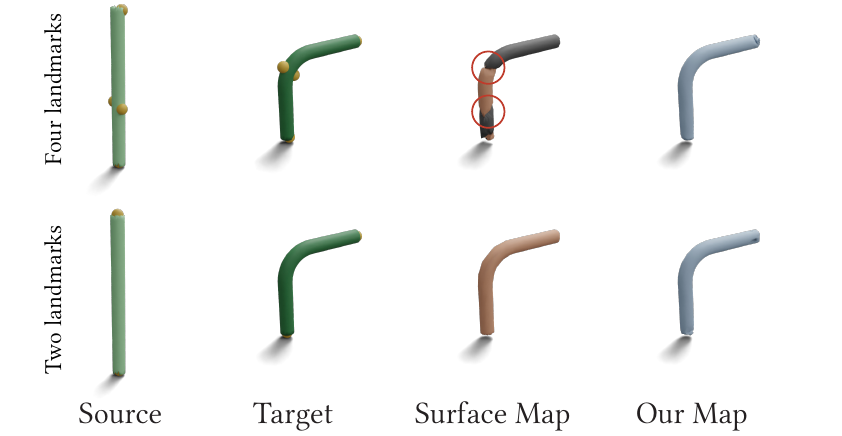}
    \caption{\resubb{Illustration of possible map degeneration when using a surface-mapping approach. Top row: Mapping using the surface-based approach of~\citet{ezuz2019reversible} initialized with four landmark points (yellow spheres) leads to the candy wrapper artifact, where regions of the mapped shape twist $180^\circ$, causing a change in orientation accompanied by a collapse in volume (red circles). The dark gray regions of the surface map show the backs of the triangles. Bottom row: mapping with two landmarks at the ends of the rods corrects the issue. In both cases, our volumetric approach maintains volumetric integrity and preserves orientation.}
    \label{fig:didactic}\vspace{-.1in}
    }
  \end{figure}
\paragraph*{Contributions.}  This paper contributes the following:
\begin{itemize}
\item We present a method for computing volumetric correspondences between far-from-isometric shapes by minimizing a symmetric distortion energy.
\item We analyze the concept of a \emph{symmetric} distortion energy, which is agnostic to the ordering of source and target domains, and provide a recipe for \textit{symmetrizing} a distortion energy. We propose a set of desirable properties for a symmetric distortion energy and analyze well-known measures of distortion within our framework.
\item We demonstrate our method on a diverse dataset of examples, showing that our method reliably extracts correspondences with low distortion.
\end{itemize}
\resub{
\subsection{Approach}

We find a dense correspondence between two volumetric shapes $M_1$ and $M_2$ represented as tetrahedral meshes. Our algorithm simultaneously optimizes for a map $\phi:M_1\to M_2$ and its (approximate) inverse $\psi\approx\phi^{-1}:M_2\to M_1$, which both take vertices of one mesh to (interiors or boundaries of) tetrahedra in the other. Our approach handles meshes of differing connectivity and facilitates finding maps between far-from-isometric shapes. 

Existing volumetric mapping methods use deformation techniques to place or repair interior tetrahedra, given a fixed map between the boundaries $\partial M_1$ and $\partial M_2$. In contrast, 
we include the boundary map as a variable. Our method can repair poorly-initialized surface maps and compute maps using only landmark correspondences as initialization.


\resubb{Our formulation is \textit{symmetric} in that the computed map is invariant to the labeling of the ``source'' and the ``target'' among $M_1$ and $M_2$. The motivation for symmetry comes from several applications where the selection of a source or target shape is unnecessary. For example, in medical imaging, one is interested in finding correspondences between brain shapes extracted from magnetic resonance images (MRI) to perform comparisons of local cortical (brain tissue) thickness~\cite{aganj2015avoiding}. Similar symmetry arises when seeking a correspondence between two humans standing in the same pose, and in general for applications seeking to align two shapes. 
The arbitrary choice of the source shape is a consequence of algorithm design rather than application need. Consequently, this choice can influence the correspondence result, introducing bias.
As shown in Fig.~\ref{fig:asymmetry}, an asymmetric method like~\cite{kovalsky2015large}  may result in unequal performance dictated by the choice of map direction. Further, the asymmetry of previous approaches in medical imaging have introduced bias in estimating the effects of Alzheimer's disease~\cite{fox2011algorithms,yushkevich2010bias,hua2011accurate}. }

\resubb{A reasonable expectation is to produce the same map--up to inversion--regardless of the choice of the source and target shape, i.e., the ordering of $M_1$ and $M_2$. One way to achieve this is to use a symmetric energy.} An energy $E$ is symmetric if $E(\phi)=E(\phi^{-1})$~\cite{cachier2000symmetrization,schmidt2019distortion}. Since $\phi^{-1}$ is challenging to compute in practice, and does not exist for maps initialized with flipped tetrahedra, we introduce $\psi \approx \phi^{-1}$ and propose a symmetric approach by optimizing $E(\phi)+E(\psi)$. Optimizing with this pair of maps is a common way of guaranteeing symmetry~\cite{christensen2001consistent,cachier2000symmetrization,ezuz2019reversible,schmidt2019distortion,schreiner2004inter}, and we show via change-of-variables that optimizing this sum is \emph{equivalent} to optimizing a different distortion energy $E^{\mathrm{Sym}}(\phi)$ on just the forward map $\phi$. 

Key to computing a high-quality map is the proper choice of distortion energy $E$ or its symmetrized counterpart $E^{\mathrm{Sym}}$. We analyze the effect of symmetrizing several widely-used distortion energies, showing that several symmetrized energies violate typical desiderata used to design mapping algorithms. For example, several symmetrized energies no longer favor local isometry. Following this analysis, we select the symmetrized ARAP energy as our distortion measure, eliminating solutions that locally favor collapsing or shrinking maps.

\begin{figure}
    \includegraphics[width=\linewidth]{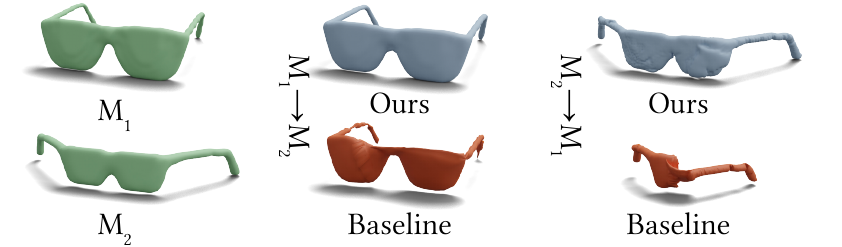}
    \caption{\resub{Comparison between our symmetric approach and an asymmetric baseline. A symmetric approach is necessary when there is no clear source or target shape to produce high-quality bidirectional maps.\vspace{-.1in}}}
    \label{fig:asymmetry}
\end{figure}

}
\section{Related Work}
Volumetric correspondence poses a new set of challenges that has not been addressed in surface-based methods. Although relatively few works consider precisely the problem tackled in this paper, we draw insights from volumetric parameterization, volumetric deformation, and surface mapping and focus our review on relevant work on these topics. 

\paragraph*{Volumetric parameterization and deformation} Parameterization and deformation algorithms provide means of deforming tetrahedral meshes into prescribed poses or domains with minimal distortion. 

A \emph{parameterization} is a deformation of a volume to a simpler domain, such as a topological ball~\cite{paille2012conformal,wang2003volumetric,yueh2019novel,abulnaga2021volumetric,garanzha2021foldover} or a polycube~\cite{fu2016inversion,paille2012conformal,xia2010direct,wang2008modeling,aigerman2013injective,li2021interactive}.  The better-studied instance of parameterization in graphics maps\resub{, possibly with cuts,}  two-dimensional surfaces (rather than volumes) into the plane; see \cite{floater2005surface,sheffer2007mesh,fu2021inversion} for discussion of this broad area of research.

In \emph{deformation}, the task is to deform a volume by moving a set of handles to a set of target positions. These methods are often based on physical models of strain~\cite{irving2004invertible} and aim to produce elastic deformations minimizing a prescribed energy choice~\cite{muller2002stable,chao2010simple,irving2004invertible,kovalsky2014controlling,fu2015amips,smith2018stable,smith2019analytic,sahilliouglu2015skuller}. \resub{In the 2D case, both skeleton-based~\cite{lewis2000pose} and physical models~\cite{nealen2006physically} can be used.} See~\cite{sieger2015shape,gain2008survey,selim2016mesh} for general discussion. 

In both problems above, one computes a deformation from the rest pose to the target. Optimization methods are used to match the target while minimizing distortion, where the distortion is measured using an energy that quantifies the deformation of the Jacobian matrix of each tetrahedron. Since these models start with a good initialization, namely the rest pose, one can optimize using a combination of energies with flip-free barriers and a constrained line search, arriving at solutions that are both flip-free and have low distortion; see e.g.~\cite{smith2015bijective} for a representative example. In contrast to these past works, we produce maps between far-from-isometric domains without an obvious effective initialization. Consequently, our choice of energies is designed to be resilient to poor initial maps that are not foldover-free.

\paragraph*{Volumetric mappings} Some methods consider the task of computing correspondences between volumetric shapes. To our knowledge, all past methods can be understood as special cases of the deformation methods where the task is to extend a fixed boundary map to the interior of a volume.

\resub{\citet{kovalsky2015large} present a local-global alternating algorithm targeting maps with bounded distortion. Their method takes an initial surface map and computes a similar map with bounded condition number. 
They demonstrate their algorithm on two volumetric correspondence examples and show one example (their Figure 11) where relaxing prescribed boundary constraints at the end of the optimization procedure can help recover from minor artifacts. 
\citet{su2019practical} also target computation of foldover-free volumetric maps with prescribed boundary; they extend the method of~\citet{kovalsky2015large} by automatically finding a suitable bound on the condition number.  Their method has impressive levels of efficiency but targets a specific notion of conformal distortion.}
\citet{stein2021splitting} propose an operator splitting technique to optimize nonconvex distortion energies to yield a flip-free parameterization; they demonstrate a few examples of volumetric correspondence. 

The approaches above require a prescribed boundary map and minimize distortion of the interior. \resubb{In contrast, our method optimizes the boundary map to minimize global distortion and does not need a bijective, orientation-preserving boundary map as an initializer.} Indeed, it is not always obvious how to design a boundary map so that the induced volumetric correspondence has low isometric disortion.  We also optimize an alternative objective function that targets symmetry and isometry rather than bounded distortion or conformal structure preservation.

A few mapping methods reduce a mapping problem between volumetric domains to a sequence of surface-mapping problems between leaves of foliations of the two domains. \citet{campen2016bijective} propose a volumetric parametrization approach relying on a foliation. Their algorithm requires the domain to be a topological ball whose tetrahedral mesh is \emph{bishellable}. \citet{cohen2019generalized} describe an alternative method to compute foliations of more-general volumetric domains using a Hele-Shaw flow along a potential function from a M\"obius inversion of the domain boundary to a sphere.
Unlike these methods that decompose the domain into surfaces, our method does end-to-end optimization of a mapping over an entire volume at once.

\paragraph*{Symmetric maps.} \emph{Symmetric} mapping methods are invariant to the ordering of the source and target shapes. Several works in 2D surface mapping do so by optimizing for the average of the forward and reverse map distortion~\cite{hass2017comparing,schreiner2004inter,ezuz2019reversible,schmidt2019distortion}. \resubb{In medical imaging, mapping is referred to as registration, where the problem is to learn a displacement field defined on a 3D grid. Symmetry, or ``inverse-consistency"~\cite{christensen2001consistent} is achieved using a similar approach of averaging the map distortions~\cite{cachier2000symmetrization,aganj2015avoiding,sabuncu2009asymmetric,leow2005inverse}, or by optimizing in a mid-space between the two images~\cite{avants2008symmetric,joshi2004unbiased}. Many of these works demonstrate that symmetry improves consistency of mapping, improves accuracy, and eliminate bias.}

We use a similar formulation to achieve symmetry. We analyze several common distortion energies symmetrized in this way and show that---surprisingly---the choice of energy can have counterintuitive consequences. In particular, distortion energies that favor isometry in one direction may not do so when optimizing their symmetrized counterparts. \resub{To prevent this undesired behavior,~\citet{hass2017comparing} developed a symmetric distortion energy that measures the distance of a conformal map from an isometry. Their distortion energy is restricted to conformal maps between genus-0 surfaces. Extending it to the volumetric case is nontrivial due to the lack of conformal maps in 3D.} 

We develop the concept of a \emph{symmetric} energy that is invariant to the choice of optimization domain over which it is taken, in the sense that the energy of the inverse map matches that of the forward map. Although it is a sensible choice in our setting, we note the term ``symmetric'' is somewhat overloaded in the parameterization and mapping literature. 
Several distortion measures have been deemed symmetric because they equally penalize scaling and shrinking, such as the symmetric Dirichlet energy~\cite{schreiner2004inter,smith2015bijective} and the symmetric ARAP energy~\cite{shtengel2017geometric}. Our analysis shows that in fact these energies do not necessarily satisfy our notion of symmetry.

\paragraph*{Surface maps} Two-dimensional surface mapping can generally be divided into (at least) three sets of approaches: methods that use an intermediate domain, methods that rely on descriptors, and methods that directly extract a map from one mesh into another. We refer the reader to one of several surveys for a broad overview~\cite{van2011survey,li2014computing,sahilliouglu2020recent}. 

The first two groups of approaches cannot be directly extended to the volumetric case. In particular, while Tutte's parameterization provides a natural means of mapping surfaces bijectively to an intermediate domain and thus provides a natural means of initializing maps in the first category, no such canonical parameterization exists for volumes.  Moreover, volumetric geometry descriptors do not appear to be sufficiently reliable for correspondence tasks.

Methods that find correspondences through an intermediate domain employ a bijective parameterization of each input to a simple domain such as the plane~\cite{kraevoy2004cross}, the sphere~\cite{gotsman2003fundamentals,haker2000conformal,lee2019dense}, or a quotient manifold~\cite{aigerman2014lifted,aigerman2015orbifold,aigerman2015seamless,aigerman2016hyperbolic,bright2017harmonic,schmidt2019distortion}. 
We also note methods like \cite{kim2011blended,lipman2009mobius}, which average multiple maps computed in a similar fashion. 
These approaches admit no obvious extension to volumes. First, the existence of a bijection to a simpler intermediate domain does not always exist. Second, many of these methods require introducing cutting seams~\cite{aigerman2015seamless}, which becomes substantially more difficult in three dimensions. Furthermore, these may not result in low-distortion maps, as minimizing the composition of the maps in the intermediate domain may result in high distortion in the final surface-to-surface map.

The second set of methods computes maps that match descriptors, possibly with added regularization. Descriptors are often distance-based~\cite{bronstein2008analysis,huang2008non}, spectral~\cite{jain2007non,mateus2008articulated,vestner2017efficient,ovsjanikov2010one}, extrinsic~\cite{ankerst19993d,salti2014shot}, or a combination~\cite{dubrovina2011approximately,kim2011blended,litman2013learning}. Many correspondence methods in this category are built on the functional maps framework~\cite{ovsjanikov2012functional,ovsjanikov2016computing}, which finds correspondences by matching functions defined on the shapes. Relatively few descriptors are available for volumetric geometry, whose structure is still inherited from the boundary surface. 

The third class of approaches directly optimize for inter-surface maps. These methods compute a map between surfaces by matching features or landmarks while minimizing distortion~\cite{schreiner2004inter,solomon2012soft,solomon2016entropic,ezuz2019reversible,mandad2017variance}.

\citet{ezuz2019reversible} produce a map between surfaces by minimizing the geodesic Dirichlet energy of the forward and reverse map and encouraging bijectivity through a reversibility energy. Our algorithm extends many of their ideas to the volumetric case. In our case, however, a new algorithm is required.


\paragraph*{Medical image registration} 
Medical image registration is a form of volumetric shape correspondence in Euclidean space. \resubb{Here, the task is
to find correspondences between two volumes defined on a dense 3D grid. The correspondence is driven by matching voxel signal intensities, for example using mutual information~\cite{klein2007evaluation} or cross-correlation~\cite{avants2008symmetric}. The optimization seeks to find a displacement field defined at the grid coordinates.} Similar to our formulation, the transformation is governed by any of several regularization terms, for example to compute a diffeomorphic transformation~\cite{beg2005computing}. We refer the reader to surveys~\cite{oliveira2014medical,sotiras2013deformablesurvey,viergever2016survey}. While both our approach and registration methods aim to find volumetric correspondences, the techniques used in medical image registration are not applicable, as they operate \resubb{on a dense Euclidean grid and are driven by intensity rather than geometry. }


\section{Mapping Problem}
\label{s:theory}
We develop a volumetric mapping method that is symmetric, in that the resulting maps are invariant to the ordering of the source and target shapes. \resub{We compute the map by minimizing an objective function that measures distortion symmetrically while satisfying a set of constraints.} In this section, we investigate the consequences of the symmetry assumption on our algorithmic design. 



\subsection{Preliminaries}

Given two bounded volumes $M_1,M_2 \subset \R^3$ with smooth boundaries $\partial M_1, \partial M_2$, we seek a map $\phi:M_1 \rightarrow M_2$. Several considerations inform our choice of $\phi$, detailed below.
Note that this problem is not the same as \emph{deformation} (sometimes referred to as ``mapping'' in past literature), which aims to find a low-distortion deformation of $M_1 \subset \R^3$ given prescribed target positions for a few handles rather than the geometry of $M_2$.

\resubb{Many algorithms for mapping and deformation can be viewed as optimizing a distortion energy of the form }
 \begin{equation}
 \label{eqn:map-distortion}
     E_f[\phi]:=\int_{M_1} f(J_{\phi}(\mathbf{x}) )\,dV(\mathbf{x}),
 \end{equation}
 where $J_{\phi}\in\R^{3\times3}$ is the map Jacobian and $dV(\mathbf{x})$ is the volume form on $M_1$.
 
 The distortion function $f:\R^{3\times 3}\rightarrow \R_{\geq 0}$ usually measures local deviation of the map from isometry. 
 Typical choices favor rigidity~\cite{rabinovich2016scalable}. For example, the as-rigid-as-possible distortion function (ARAP)~\cite{liu2008local} measures the deviation of the Jacobian from the set of rotation matrices $\mathrm{SO}(3)$: $$f_{\mathrm{ARAP}}(J) = \min_{ R\in \mathrm{SO}(3)}\|J-R\|_F^2.$$ In contrast, the Dirichlet energy functional $$f_{\mathrm{D}}(J) = \| J\|_F^2$$ favors the as-constant-as-possible map~\cite{schreiner2004inter}. Selection of the distortion function is application-dependent. For example, one might choose $f$ to model physical strain for deformation. Alternatively, one might select $f$ to encourage injectivity. 
 
 In almost all applications, $f$ is chosen to be \emph{rotation invariant}, reflecting the fact that rigid motions of $M_1$ and $M_2$ should not affect the computed map.  In this case, $f(J)$ is a function of the singular values $\singularvalues(J)$, the elements of the diagonal matrix $\Sigma$ in the singular value decomposition (SVD) $J=U\Sigma V^\top$.  In a slight abuse of notation, in our subsequent discussion we will use $f$ to denote both a function on matrices in $\R^{3\times3}$ and vectors of singular values in $\R^3$, with $f(J):=f(\singularvalues(J))$.
 
 In addition to finding a map with low distortion, we are concerned with finding one that satisfies a desired set of constraints. \resub{For example, we can constrain the boundary of the source volume to be mapped to the boundary of the target, i.e.\ $\phi(\partial M_1) = \partial M_2$}. 
 We use $\mathcal{P}$ to denote the constrained \emph{feasible set}. One might imagine other constraints, for example ensuring a set of landmark points are mapped to the pre-specified locations, further restricting $\mathcal P$. Moreover, regularizing objective terms, $\mathrm{Reg}[\phi]$ could be added. So, our optimization problem becomes

 \begin{equation}
\label{eq:opt-problem}
\begin{aligned}
\argmin_{\phi} \quad &      \int_{M_1} f(J_{\phi}(\mathbf{x}) )\, dV(\mathbf{x}) + \mathrm{Reg}[\phi]\\
\textrm{subject to} \quad & \phi \in \mathcal{P}.\\
\end{aligned}
\end{equation}

  


\begin{figure*}
    \centering
    \begin{tabular}{r@{\ }c@{\ }c@{\ }c@{\ }c@{\ }c@{\ }c}
         & Dirichlet & Symmetric Dirichlet & MIPS & Symmetric gradient & Hencky strain & ARAP \\
         \raisebox{.2in}{$f$} & 
         \includegraphics[height=.15\linewidth]{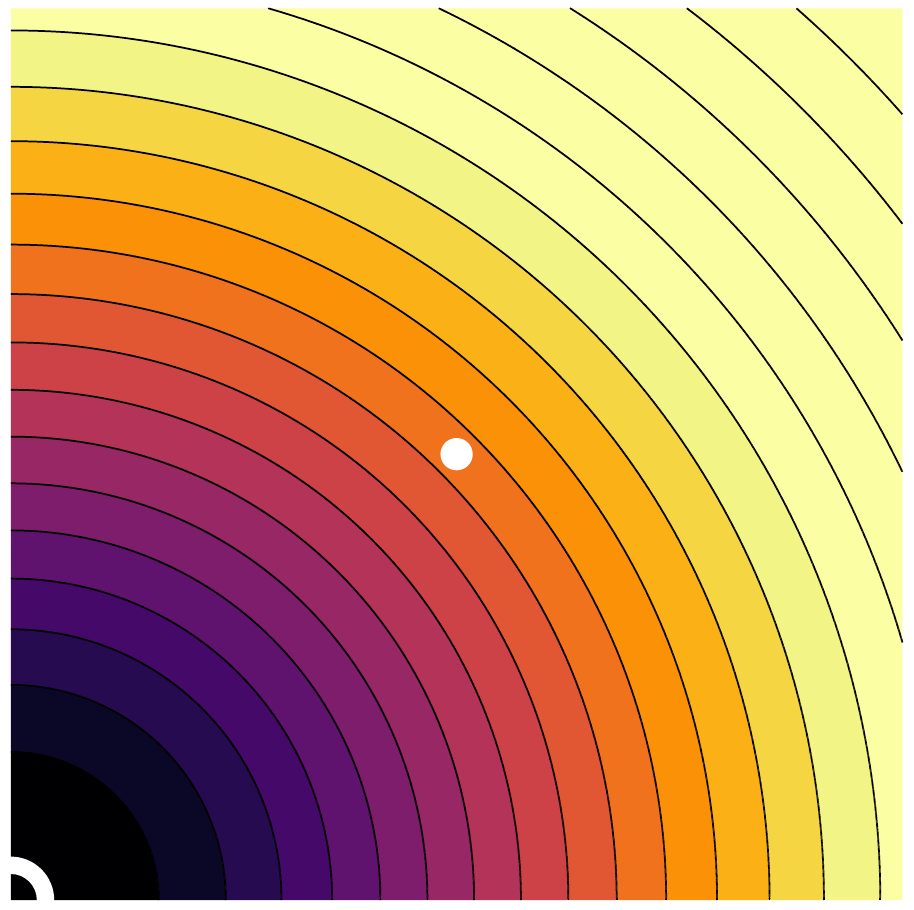}&
         \includegraphics[height=.15\linewidth]{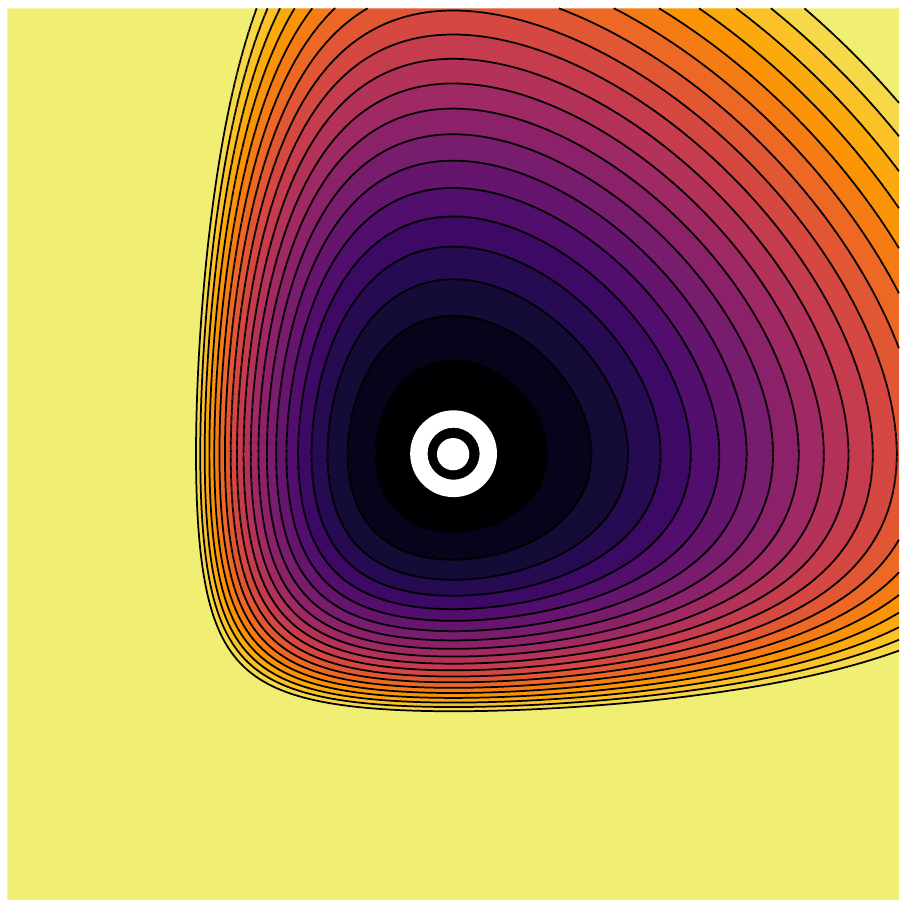}&
         \includegraphics[height=.15\linewidth]{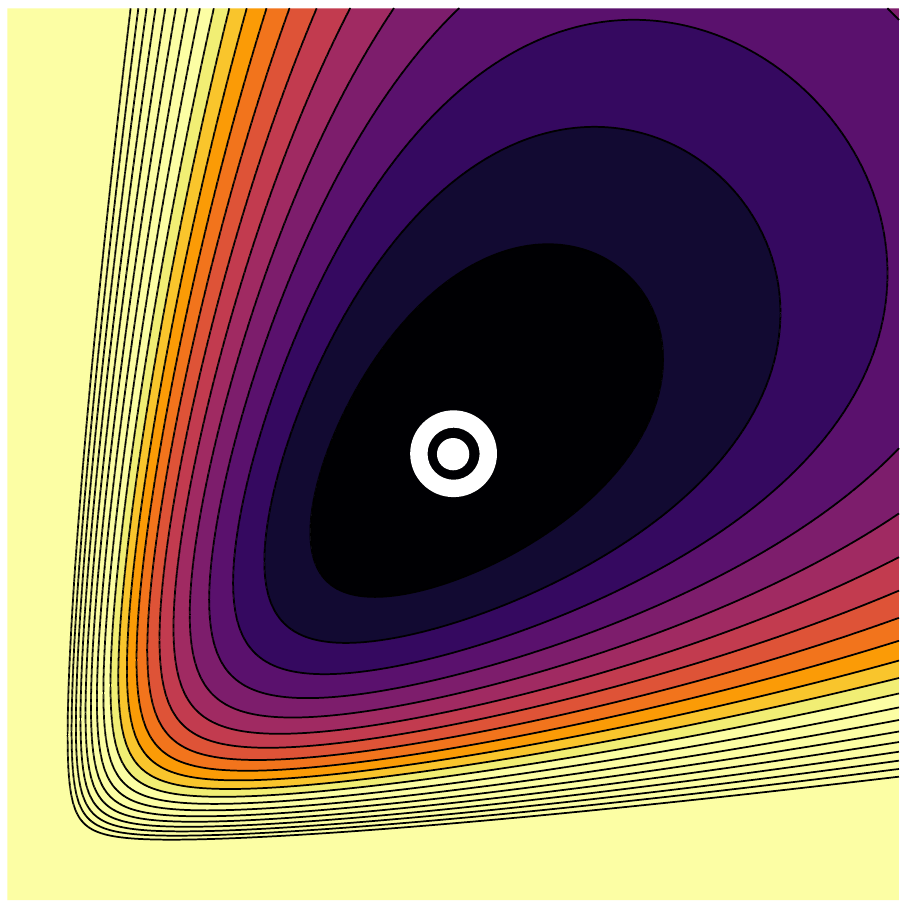}&
         \includegraphics[height=.15\linewidth]{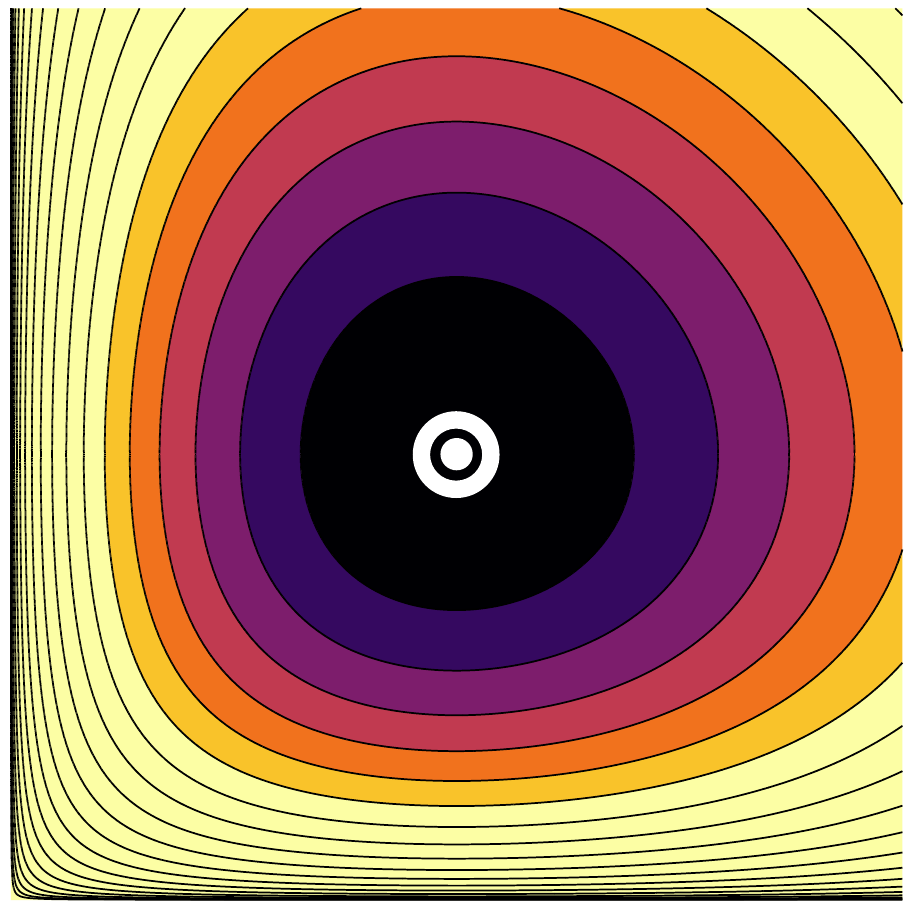}&
         \includegraphics[height=.15\linewidth]{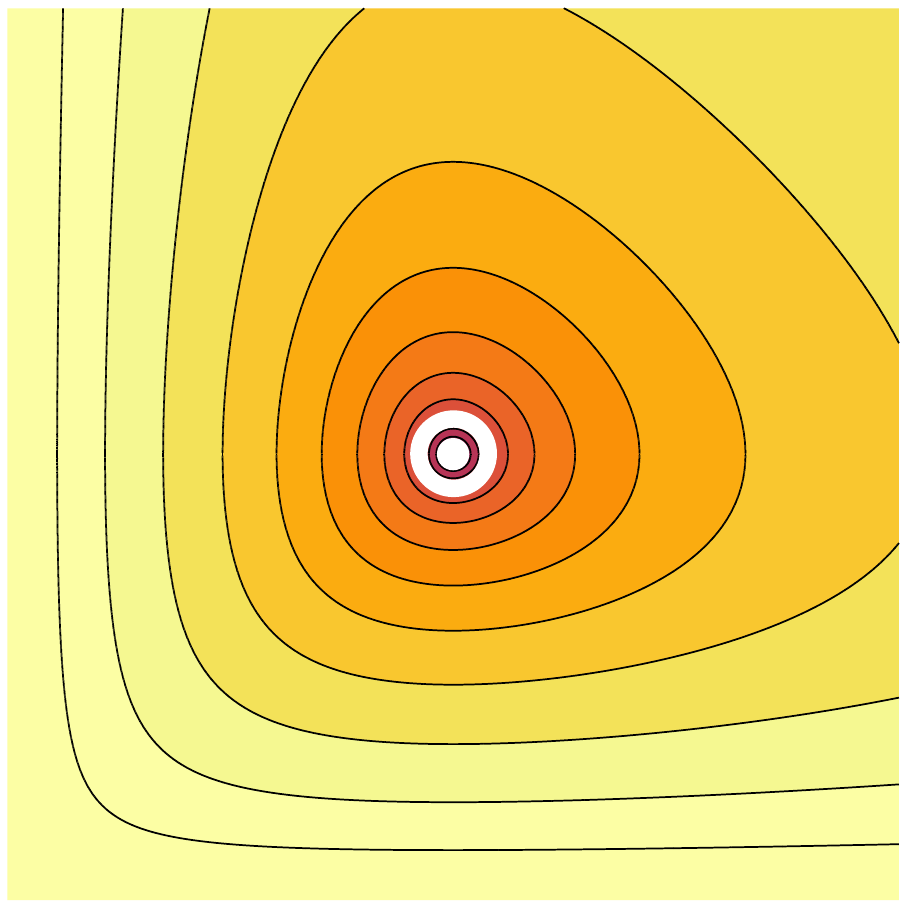}&
         \includegraphics[height=.15\linewidth]{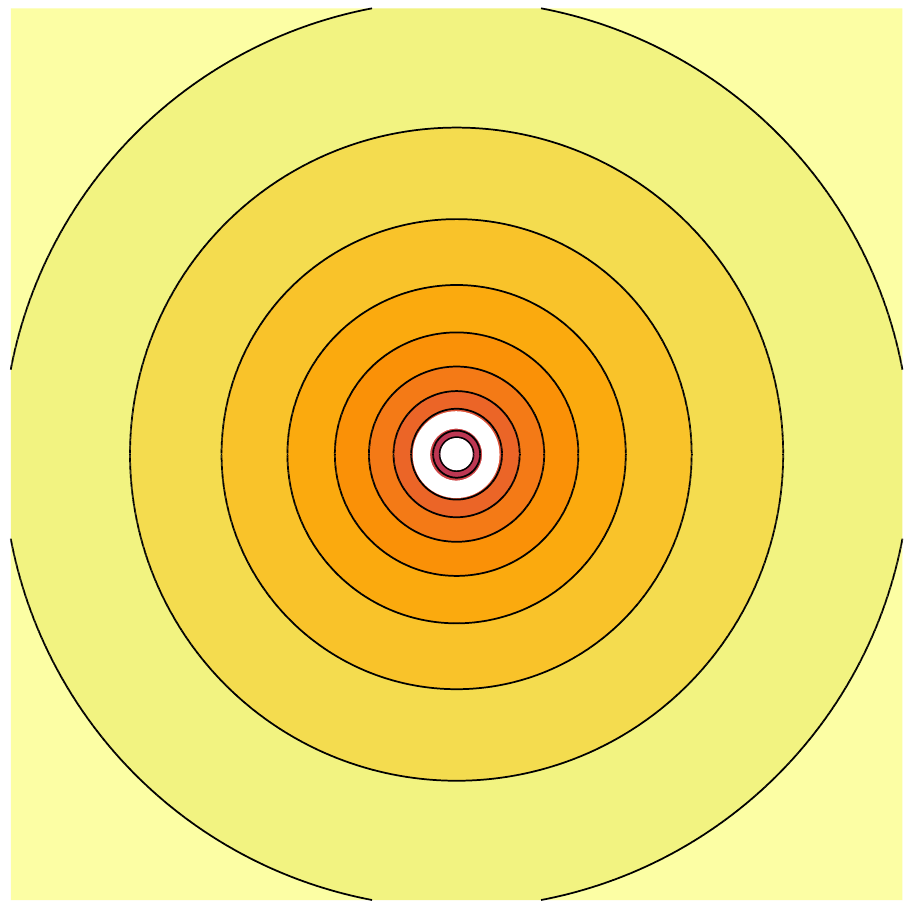}
         \\
         \raisebox{.2in}{$f^{\mathrm{Sym}}$} &
         \includegraphics[height=.15\linewidth]{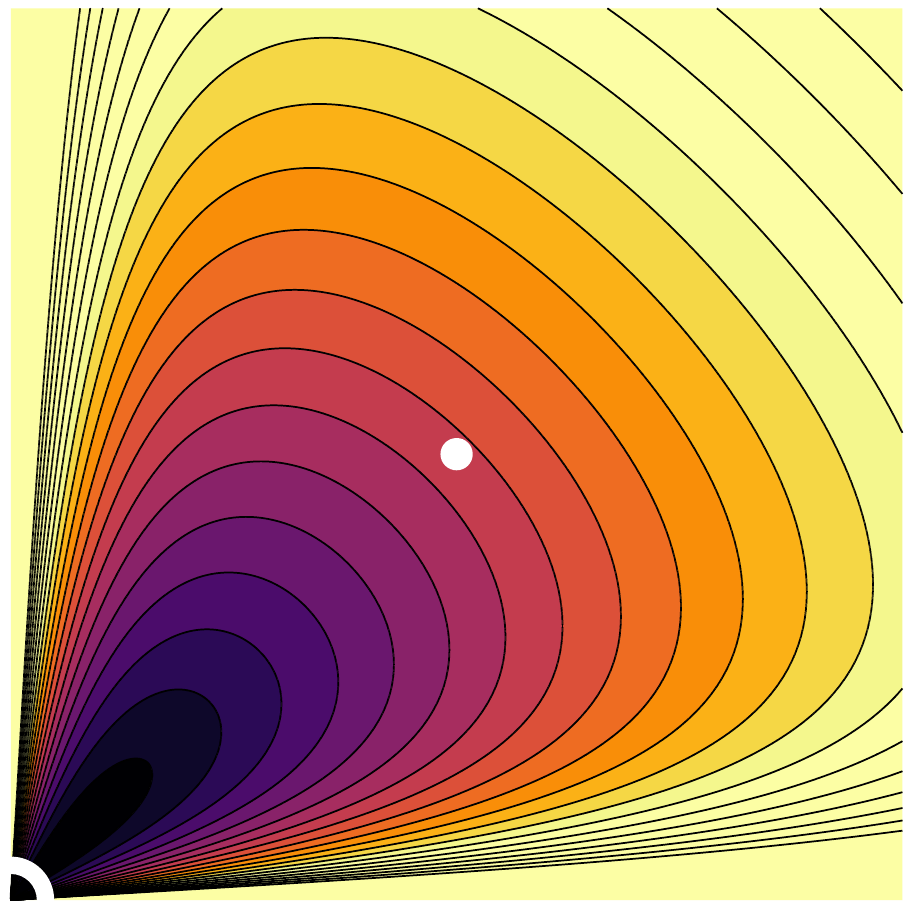}&
         \includegraphics[height=.15\linewidth]{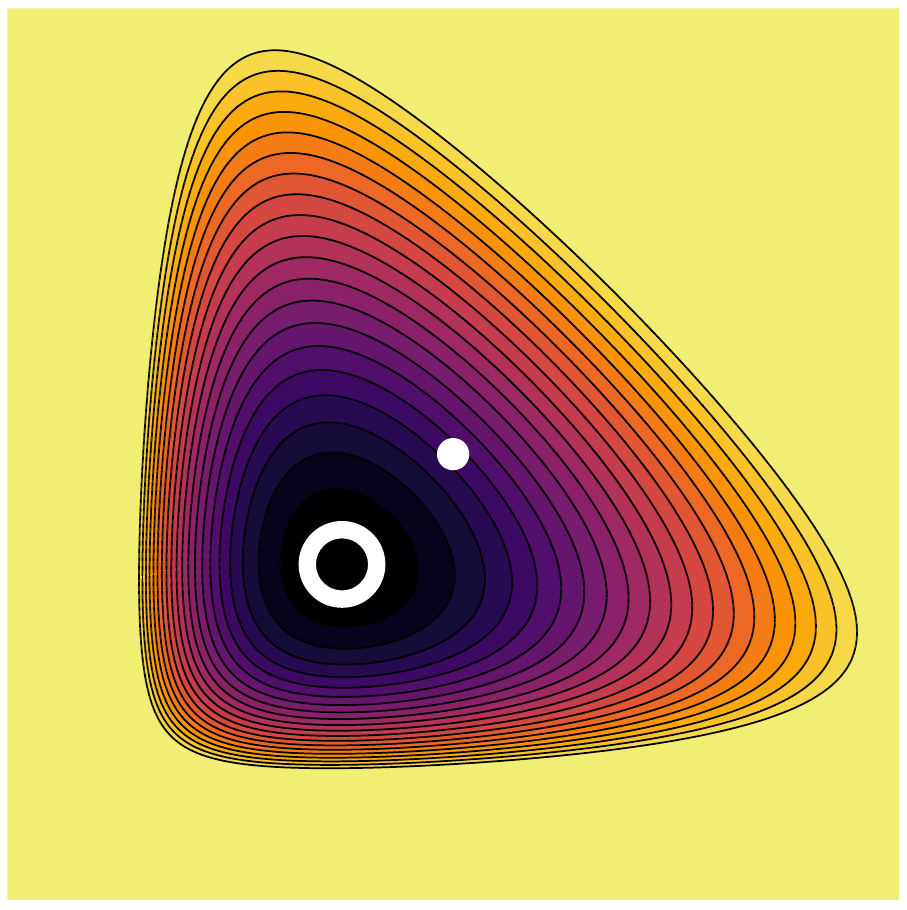}&
         \includegraphics[height=.15\linewidth]{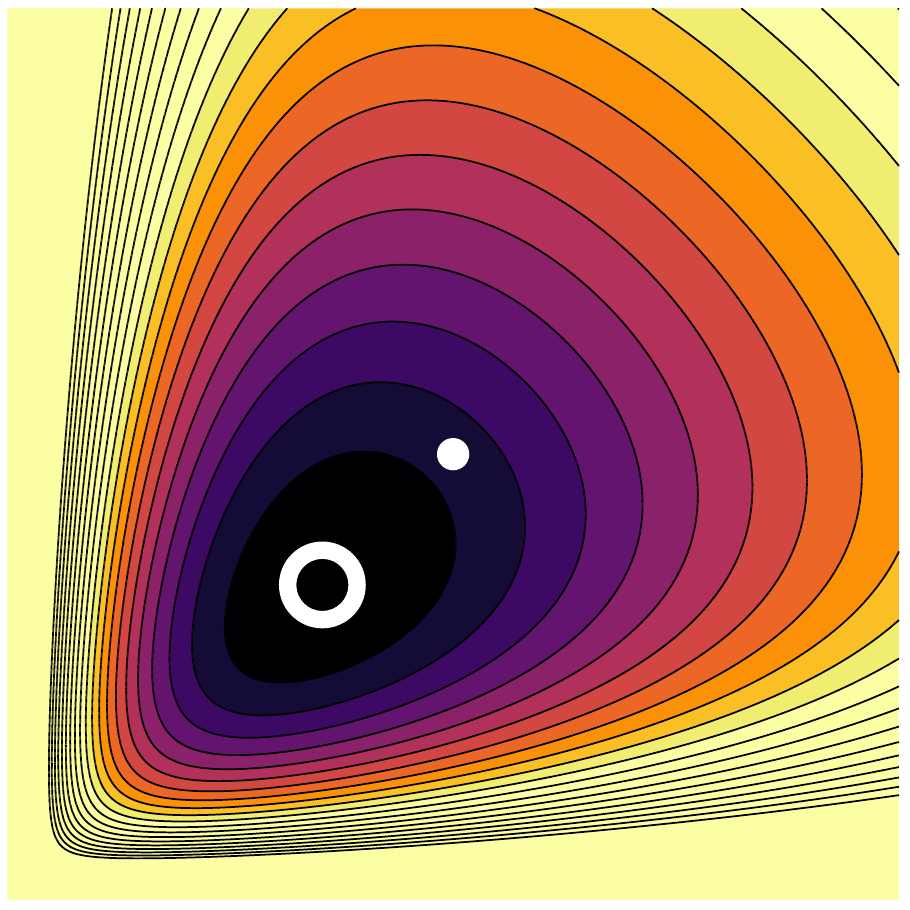}&
         \includegraphics[height=.15\linewidth]{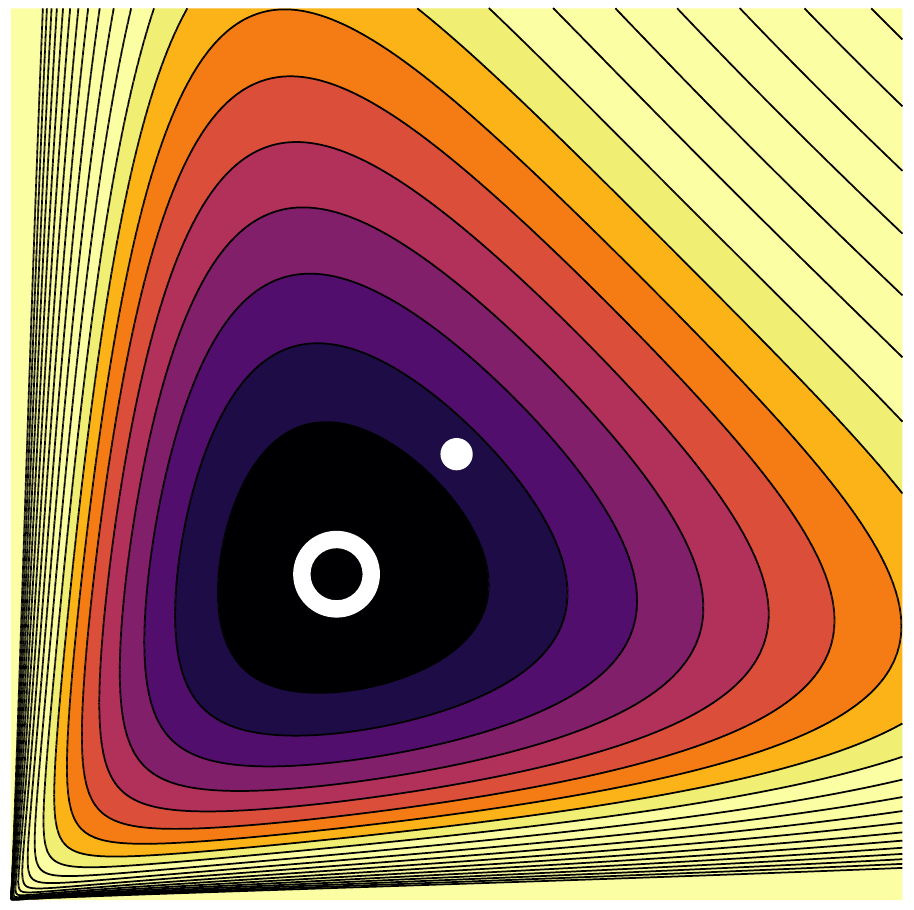}&
         \includegraphics[height=.15\linewidth]{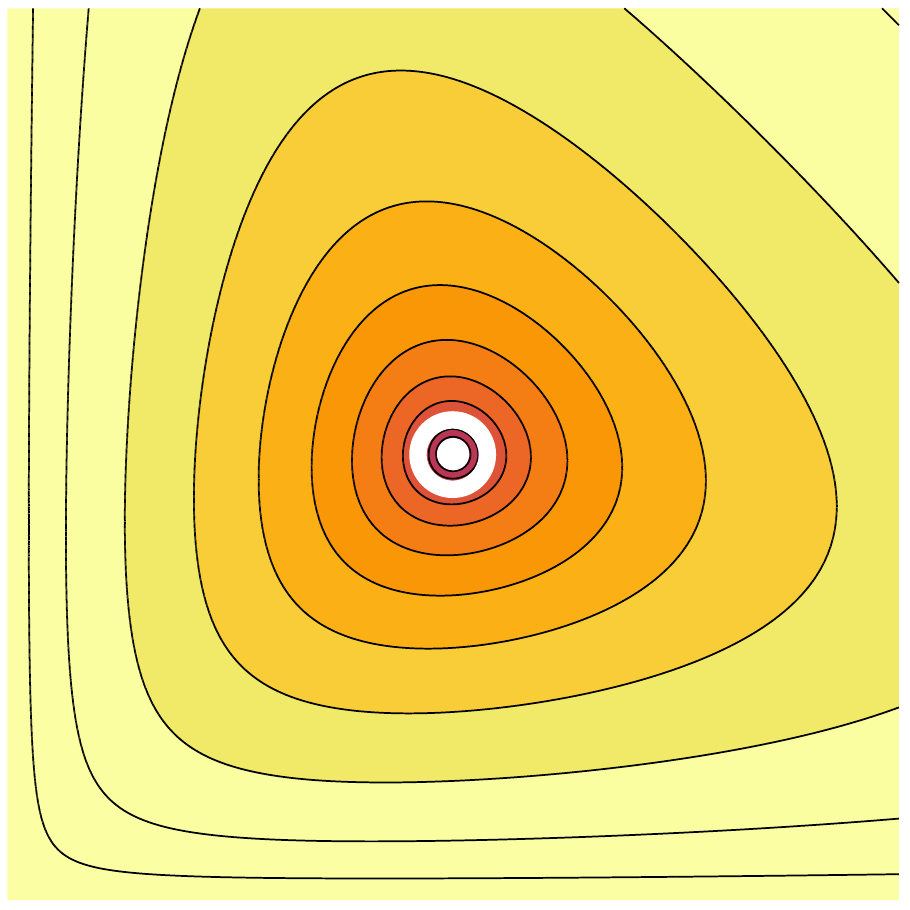}&
         \includegraphics[height=.15\linewidth]{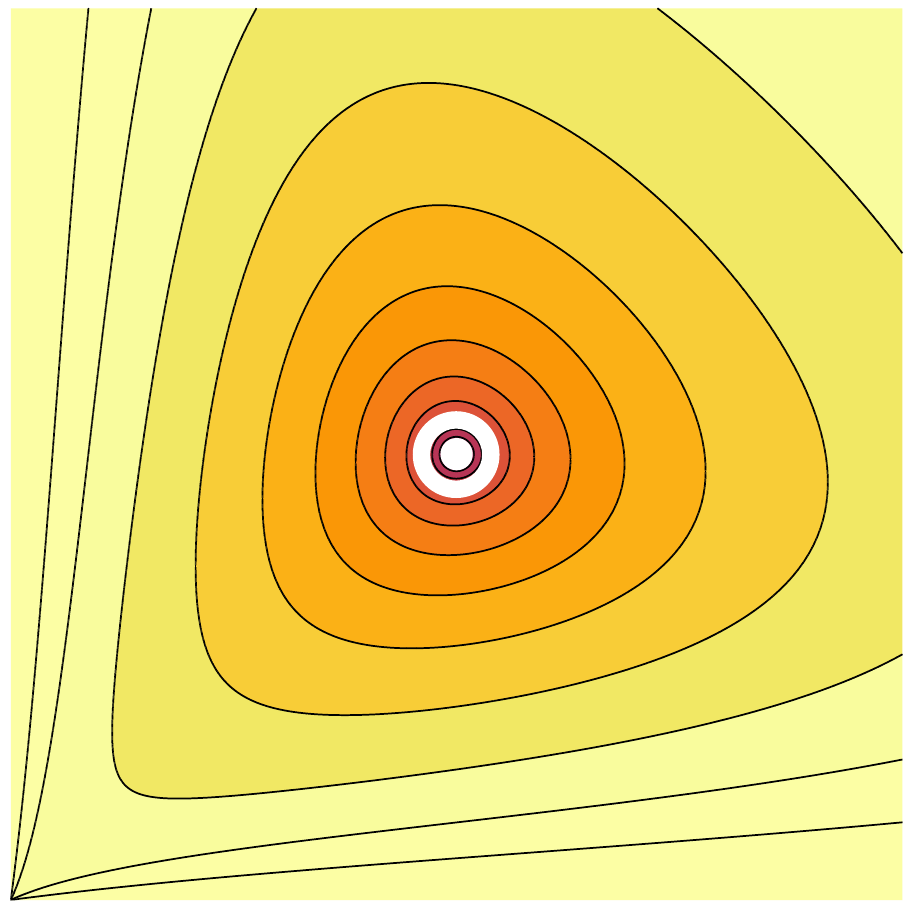}
    \end{tabular}
    \caption{
    Level sets of distortion functions $f$ (top) and their symmetrized counterparts $f^{\mathrm{Sym}}$ (bottom) evaluated at $(\sigma_1,\sigma_2,1)$ for $(\sigma_1,\sigma_2)\in[0,2]^2.$ We mark $(1,1)$ as a white dot and the location of the minimum as a circle. In the parlance of \S\ref{sec:symm-design}, all energies except the Dirichlet energy preserve structure ($f$ minimized at $(1,1,1)$), while only the Hencky strain and ARAP energies favor isometry ($f^{\mathrm{Sym}}$ minimized at $(1,1,1)$).  Only Dirichlet and ARAP are nonsingular, since the level sets do not diverge as singular values approach $0$.
    }
    \label{fig:levelsets}
\end{figure*}

\subsection{Symmetrized Energy Functions}
\label{s:symm-energy}

For correspondence problems where there is no clear distinction between the rest pose and the target pose, it is desirable for a volumetric correspondence method to be \emph{symmetric}, meaning that it is invariant to the ordering of the ``source'' domain $M_1$ and ``target'' domain $M_2$. \resub{Symmetry requires $E_f[\phi] = E_f[\phi^{-1}]$. In this section, we arrive at a set of conditions on $f$ to check if an energy is symmetric, and propose a symmetrization procedure to obtain the symmetrized form of a distortion function $f$. We later investigate the effects on computing a map using the symmetrized form of $f$.}

Following~\cite{schreiner2004inter,ezuz2019reversible,schmidt2019distortion,cachier2000symmetrization,christensen2001consistent}, one simple way to achieve symmetry is to optimize the average of the distortion energy of a map with the distortion energy of its inverse. \citet{ezuz2019reversible} and \citet{schreiner2004inter} use the simplest choice of energies to symmetrize---the Dirichlet energy---while~\citet{schmidt2019distortion} use the symmetric Dirichlet energy to prevent foldovers. Below, we analyze the consequences of using these energies and other possible choices of $f$ not considered in prior work. 
Surprisingly, our analysis will show that the Dirichlet energy and several other seemingly reasonable choices do \emph{not} yield an effective notion of distortion after symmetrization, leading us to employ an alternative in our technique.

%

We start by deriving conditions on $f$ that ensure the distortion energy $E_f$ is invariant to the ordering of the source and target.
Let $M_1$ and $M_2$ be open subsets of $\R^n$ and $\phi : M_1 \rightarrow M_2$ a diffeomorphism between them. \resub{For simplicity, assume $M_1$ and $M_2$ are normalized to have volume $1$}. 
We can compute the distortion of the map $\phi$ by applying Eq.~\eqref{eqn:map-distortion} in both directions:
\begin{align}
    E_f[\phi] &= \int_{M_1}f\left(J_{\phi}\left(\mathbf{x}\right)\right)dV_{1}(\mathbf{x})\label{eqn:m1-dist}\\
    E_f[\phi^{-1}] &= \int_{M_2}f\left(J_{\phi^{-1}}\left(\mathbf{y}\right)\right)dV_2(\mathbf{y}).\label{eqn:m2-dist}
\end{align}

Pulling back the integral in Eq.~\eqref{eqn:m2-dist} to $M_1$, we use a change of variables to $\mathbf{y} = \phi(\mathbf{u})$ to show
\begin{equation}
    E_f[\phi^{-1}] = \int_{M_1} f\left(J_{\phi^{-1}}\left(\phi\left(\mathbf{u}\right)\right)\right)\left\vert\det J_{\phi}(\mathbf{u})\right\vert \,dV_1(\mathbf{u}).
\end{equation}

By the inverse function theorem,
\begin{equation}
\label{eqn:symmetrized-int}
    E_f[\phi^{-1}] = \int_{M_1} f\left(\left(J_{\phi}\left(\mathbf{u}\right)\right)^{-1}\right) \left\vert\det J_{\phi}\left(\mathbf{u}\right)\right\vert \, dV_1(\mathbf{u}).
\end{equation}

For invariance with respect to the integration domain,  Eq.~\eqref{eqn:m1-dist} must agree with Eq.~\eqref{eqn:symmetrized-int}.
Matching the integrands, 
\begin{equation}\label{eq:invertiblecondition}
\boxed{f(J)=\left\vert\det J\right\vert f\left(J^{-1}\right)},
\end{equation}
is sufficient for this equivalence. In terms of the singular values, we obtain
\begin{equation}\label{eq:invertiblecondition_sigma}
f(\singularvalues)=\left\vert \prod_{i=1}^n \sigma_i \right\vert f\left(\frac1{\sigma_1},\ldots,\frac1{\sigma_n}\right).
\end{equation}
Here and in our subsequent discussion, we will use $n$ to refer to the dimensionality of the domains $M_1,M_2$ when the result under discussion applies to maps in any dimension; $n=3$ in our application. \resubb{This condition was first proposed by~\citet{cachier2000symmetrization} to propose symmetrization by averaging the distortion function in both mapping directions.} This motivates the following definition:

\begin{definition}[Symmetric energy]
    A distortion energy $E_f$ whose distortion function $f$ satisfies Eq.~\eqref{eq:invertiblecondition}---or Eq.~\eqref{eq:invertiblecondition_sigma} in terms of singular values---is a \emph{symmetric energy}.\label{def:symmetric}
\end{definition}

Our symmetric energy condition is both necessary and sufficient for symmetric distortion measures, in the following sense:
\begin{proposition}
$E_f[\phi]=E_f[\phi^{-1}]$ for all $M_1$, $M_2$, and $\phi$ as defined above if and only if $f$ is a symmetric energy.
\end{proposition}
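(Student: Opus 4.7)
\smallskip
\noindent\textit{Proof plan.} The plan is to establish the two implications separately. The ``$\Leftarrow$'' direction is essentially already in the text: assuming Eq.~\eqref{eq:invertiblecondition} holds, substituting $f\left((J_\phi)^{-1}\right) = f(J_\phi)/|\det J_\phi|$ into Eq.~\eqref{eqn:symmetrized-int} makes its integrand coincide with that of Eq.~\eqref{eqn:m1-dist}, yielding $E_f[\phi^{-1}]=E_f[\phi]$ for every diffeomorphism $\phi$. So the real content is the ``$\Rightarrow$'' direction, which I would prove by contrapositive.

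Set $h(J):= f(J) - |\det J|\,f(J^{-1})$, so that the change of variables that produced Eq.~\eqref{eqn:symmetrized-int} gives
\begin{equation*}
E_f[\phi]-E_f[\phi^{-1}] \;=\; \int_{M_1} h(J_\phi(\mathbf{x}))\,dV_1(\mathbf{x}).
\end{equation*}
The hypothesis is that this integral vanishes for every admissible diffeomorphism $\phi$ between volume-$1$ domains. Suppose, for contradiction, that $h(J_0)\neq 0$ for some $J_0\in \mathrm{GL}^+(n)$. Since $h$ is continuous on $\mathrm{GL}^+(n)$, there is an open neighborhood $U$ of $J_0$ on which $h$ is bounded away from $0$ with a fixed sign.

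The strategy is then to construct a test diffeomorphism whose Jacobian lies in $U$ on a prescribed region of positive measure and is controlled elsewhere, so that the above integral is provably nonzero. Concretely, I would pick a point $\mathbf{p}\in M_1$ and a small ball $B_r(\mathbf{p})\subset M_1$, define $\phi$ to agree with the affine map $\mathbf{x}\mapsto \mathbf{q}+J_0(\mathbf{x}-\mathbf{p})$ on $B_{r/2}(\mathbf{p})$, and extend $\phi$ to a diffeomorphism $M_1\to M_2$ (both of volume $1$) via a compactly supported correction on a transition annulus; the existence of such an extension with prescribed boundary/interior data and total volume $|\det J_0|\cdot\mathrm{vol}(B_{r/2})+(\text{correction})=1$ follows, e.g., from the Dacorogna--Moser/Moser trick for prescribing Jacobian determinants. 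By choosing $r$ small and smoothly interpolating between $J_0$ and the background map, the $|h(J_\phi)|$ integrand on the transition annulus is uniformly bounded, so that annulus contributes $O(r^n\cdot (\text{annulus thickness}))$, while $B_{r/2}$ contributes at least $|h(J_0)|\,\mathrm{vol}(B_{r/2})>0$ with a fixed sign. Balancing the two shows $\int h(J_\phi)\,dV_1 \neq 0$, contradicting the hypothesis.

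The main obstacle is the bookkeeping for this construction: producing a genuine diffeomorphism between two fixed volume-$1$ domains whose Jacobian is exactly $J_0$ on a positive-measure piece, while simultaneously controlling the integrand on the compensation region. The cleanest route is to invoke the Dacorogna--Moser theorem on prescribing Jacobians rather than trying to build $\phi$ by hand. A secondary subtlety is that $f$ is only assumed rotation-invariant, not convex or smooth, so the argument should rely only on continuity of $h$ (inherited from continuity of $f$, which is implicit in Eq.~\eqref{eqn:map-distortion} being well-defined); if $f$ is permitted to be merely measurable, one would weaken ``constant sign on a neighborhood of $J_0$'' to ``constant sign on a positive-measure set containing $J_0$ as a Lebesgue point'' and use the Lebesgue differentiation theorem in place of continuity.
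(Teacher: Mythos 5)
Your ``$\Leftarrow$'' direction is exactly the paper's. For the converse, however, the paper's argument is far more elementary than yours: it simply takes $M_1=B_1(\mathbf 0)$, $\phi(\mathbf x)=J\mathbf x$ for an arbitrary invertible $J$, and $M_2=\phi(M_1)$. Both integrands are then constant, so $E_f[\phi]=f(J)\,\mathrm{vol}(B_1(\mathbf 0))$ and $E_f[\phi^{-1}]=f(J^{-1})\,\lvert\det J\rvert\,\mathrm{vol}(B_1(\mathbf 0))$, and Eq.~\eqref{eq:invertiblecondition} drops out immediately. (Note the paper quietly ignores the ``volume $1$'' normalization here---$M_2=\phi(M_1)$ has volume $\lvert\det J\rvert\,\mathrm{vol}(B_1(\mathbf 0))$---treating it as a convention rather than a constraint on the quantifier. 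Your entire localization apparatus exists only to respect that normalization.)

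Beyond being overbuilt, your construction has a genuine gap in the balancing step. The contribution of $B_{r/2}(\mathbf p)$ to $\int h(J_\phi)\,dV_1$ is of order $\lvert h(J_0)\rvert\,r^n$, and the natural rescaling $\Phi_r(\mathbf x)=\mathbf p+r\,\Phi\bigl((\mathbf x-\mathbf p)/r\bigr)$ of a fixed transition profile $\Phi$ leaves the pointwise Jacobian distribution unchanged, so the transition region's contribution also scales exactly like $r^n$: shrinking $r$ does not change the \emph{ratio} of the two contributions, and since $J_\phi$ leaves the neighborhood $U$ on the transition region, $h$ may change sign there and cancel the ball's contribution. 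The alternative---making the transition annulus thin relative to $r$ so its volume is $o(r^n)$---forces the correction map to absorb an $O(1)$ relative displacement over an $O(\epsilon)$ distance, so $\lVert J_\phi\rVert\sim 1/\epsilon$ there; $\lvert h(J_\phi)\rvert$ is then \emph{not} uniformly bounded (for Dirichlet-type $f$ it grows like $1/\epsilon^2$, beating the volume gain). Dacorogna--Moser controls the Jacobian determinant of the extension, not the full Jacobian matrix, so it does not rescue the uniform bound you need. As written, the claimed estimate ``$O(r^n\cdot(\text{annulus thickness}))$ with uniformly bounded integrand'' cannot hold simultaneously with the annulus being a vanishing volume fraction, and the contradiction is not established. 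The fix is simply to drop the volume normalization (as the paper does) and use the global affine test map.
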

\begin{proof}
Substituting \eqref{eq:invertiblecondition} into \eqref{eqn:symmetrized-int} shows that any $f$ satisfying \eqref{eq:invertiblecondition} automatically satisfies $E_f[\phi]=E_f[\phi^{-1}]$.  We now show the converse. Since $E_f[\phi]=E_f[\phi^{-1}]$ $\forall$ $M_1$, $M_2$, $\phi$ as defined above, we can choose $M_1 = B_1(\mathbf{0}) \subset \R^n$, the open ball of radius $1$. Consider any invertible $J\in\R^{n\times n}$, and define a map $\phi(\mathbf{x}) := J\mathbf{x}$, whose Jacobian is given by $J_{\phi}(\mathbf{x}) = J$. Take $M_2:= \phi(M_1)$. Applying \eqref{eqn:m1-dist},
\begin{equation}
\begin{aligned}
    \label{eqn:proof-Ef}
    E_f[\phi] &= 
    f(J) \cdot \mathrm{vol}\left(B_1(\mathbf 0)\right).
    \end{aligned}
\end{equation}
Similarly, applying \eqref{eqn:symmetrized-int} yields
\begin{equation}
    \label{eqn:proof-Er}
    E_f[\phi^{-1}] 
    = f\left( J^{-1} \right) \left \vert \det J \right \vert \cdot \mathrm{vol}\left(B_1( \mathbf{0} ) \right).
\end{equation}
Equating Eq.~\eqref{eqn:proof-Ef} and Eq.~\eqref{eqn:proof-Er} and dividing by $\mathrm{vol}\left(B_1( \mathbf{0} ) \right)$ completes the proof.
\end{proof}

Not all distortion energies are symmetric, but there is a simple procedure to construct a symmetric distortion function $f^{\mathrm{Sym}}$ from any distortion function $f$.
For any distortion function $f$, we can obtain a corresponding $f^{\mathrm{Sym}}$ fulfilling Eq.~\eqref{eq:invertiblecondition} by---in effect---computing $\frac{1}{2}E_f[\phi]+\frac{1}{2}E_f[\phi^{-1}]$ via our symmetrization procedure:
\begin{equation}\label{eqn:symmetrized}
f^{\mathrm{Sym}}(J)=\frac{1}{2}f(J) + \frac{1}{2}\left\vert\det J\right\vert f\left(J^{-1}\right),
\end{equation}
or in terms of singular values,
\begin{equation}\label{eqn:symmetrized_sing}
f^{\mathrm{Sym}}(\singularvalues)=\frac{1}{2}f(\singularvalues) + \frac{1}{2}\left\vert \prod_{i=1}^n \sigma_i \right\vert f\left(\frac1{\sigma_1},\ldots,\frac1{\sigma_n}\right).
\end{equation}

For example, suppose $f_{\mathrm{D}}(J)=\| J\|_{F}^2$ is the distortion function of the Dirichlet energy.  Then, the average of the Dirichlet energy of the forward map and of the inverse map yields the distortion function:
\begin{equation}\label{eqn:symmetrizeddirichlet}
f_{\mathrm{D}}^{\mathrm{Sym}}(J) = \frac{1}{2}\|J\|_{F}^2 + \frac{1}{2}\vert \det J \vert \|J^{-1}\|_F^2,
\end{equation}
or for $n = 3$,
\begin{equation}
    f_{\mathrm{D}}^{\mathrm{Sym}}(\sigma_1,\sigma_2,\sigma_3) =  \frac{1}{2}\sum_{i=1}^3\sigma_i^2 + \frac{1}{2}\left(\sigma_1\sigma_2\sigma_3\right) \left( \sum_{j=1}^3\sigma_j^{-2}\right)
\end{equation}
This is not the ``symmetric'' Dirichlet energy from past work on parameterization~\cite{smith2015bijective,rabinovich2016scalable}, which has the form $\frac{1}{2}\|J\|_F^2 + \frac{1}{2}\|J^{-1}\|_F^2$. Incidentally, in 2D, the second term in Eq.~\eqref{eqn:symmetrizeddirichlet} is the objective function of the inverse harmonic mapping problem used to obtain foldover-free mappings by \citet{garanzha2021foldover}. \resub{This term is also known as the inverse Dirichlet energy~\cite{knupp1995mesh}.}

Eq.\ \eqref{eqn:symmetrizeddirichlet} is a model for the objective function for mapping surfaces in~\cite{schreiner2004inter,ezuz2019reversible}, and one could reasonably attempt to reuse the same formulation for volumes.  More careful examination of this function, however, indicates some undesirable properties.  In particular, as illustrated in Fig.~\ref{fig:levelsets}, the distortion function $f_{\mathrm{D}}^{\mathrm{Sym}}(\singularvalues)$ is not minimized at $(1,1,1)$, the singular values of a rigid map.  That is, the distortion function of the symmetrized Dirichlet energy $f_{\mathrm D}^{Sym}$ favors non-isometric maps, even though it is symmetric.

The counterintuitive behavior of energies like in Eq.~\eqref{eqn:symmetrizeddirichlet} suggests that algorithms optimizing the sum of the distortion of a map and the distortion of its inverse can have unpredictable behavior, even for standard choices of distortion functions. We examine this effect empirically in \S\ref{sec:symmetrized_energy_choice}.


\subsection{Designing Symmetric Distortion Energies}
\label{sec:symm-design}
\resub{In this section, we extend the previous analysis to compute the symmetrized form of several commonly used distortion functions and examine their behavior in computing a volumetric map. We propose a list of desiderata to guide the selection of a desirable distortion function $f$.}

Several properties are desirable when selecting $f$:
\begin{itemize}
    \item Favors isometry:  $f^\mathrm{Sym}$ is minimized at $(1,1,1)$.
    \item Preserves structure:  $f$ is minimized at $(1,1,1)$.
    \item Nonsingular:  $f$ is defined for all matrices.
\end{itemize}
Favoring isometry and preserving structure are similar but not identical conditions, and they are desirable for different reasons.  Distortion energy functions that favor isometry are the typical choice for geometry processing applications, and this condition simply expresses a preference for maps $\phi$ that are rigid.  On the other hand, structure-preserving choices of $f$ facilitate optimization routines like ours that alternate between estimating $\phi$ and $\psi$, ensuring that both alternating steps work toward a common goal.  Similarly, nonsingular functions $f$ avoid the need for barrier optimization techniques and feasible initialization. 

The following proposition provides a necessary condition that can be used to rule \emph{out} many standard choices of $f$ when considering the properties above:
\begin{proposition}
Suppose a differentiable function $f:\R^3\to\R_{\geq0}$ favors isometry and preserves structure, i.e., $f(\singularvalues)$ and $f^{\mathrm{Sym}}$ are minimized at $(1,1,1)$.  Then, $f(1,1,1)=0$ and $\nabla f(1,1,1)=(0,0,0)$.
\end{proposition}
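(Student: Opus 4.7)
The plan is to apply first-order optimality conditions at $(1,1,1)$ to the two hypotheses—one for $f$ and one for $f^{\mathrm{Sym}}$—and combine the resulting equations. Since $f$ preserves structure, the differentiability of $f$ and the fact that $(1,1,1)$ is an interior minimizer give $\nabla f(1,1,1) = (0,0,0)$ immediately from Fermat's theorem. The nontrivial part is extracting $f(1,1,1)=0$, which will come from the symmetrized minimality.

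For the second conclusion, I would explicitly compute $\nabla f^{\mathrm{Sym}}(1,1,1)$ using the definition
\[
f^{\mathrm{Sym}}(\sigma_1,\sigma_2,\sigma_3)=\tfrac{1}{2}f(\sigma_1,\sigma_2,\sigma_3)+\tfrac{1}{2}\sigma_1\sigma_2\sigma_3\,f\bigl(1/\sigma_1,1/\sigma_2,1/\sigma_3\bigr),
\]
where the absolute value can be dropped in a neighborhood of $(1,1,1)$ since all $\sigma_i$ stay positive. Differentiating the second term via the product rule at $(1,1,1)$ yields, for each $i$,
\[
\partial_{\sigma_i}\bigl[\sigma_1\sigma_2\sigma_3\,f(1/\sigma_1,1/\sigma_2,1/\sigma_3)\bigr]\Big|_{(1,1,1)} = f(1,1,1)-\partial_i f(1,1,1),
\]
because the chain rule produces a factor $-1/\sigma_i^2 = -1$ on the inner derivative. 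Adding the contribution $\tfrac12\partial_i f(1,1,1)$ from the first term shows the inner-derivative terms cancel, leaving
\[
\nabla f^{\mathrm{Sym}}(1,1,1) = \tfrac{1}{2}\,f(1,1,1)\,(1,1,1).
\]

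Because $f^{\mathrm{Sym}}$ favors isometry, this gradient must vanish, forcing $f(1,1,1)=0$. Combined with $\nabla f(1,1,1)=0$ from the structure-preserving hypothesis, the proof is complete. The main obstacle is really just the bookkeeping in the derivative computation: making sure the chain-rule factors from $1/\sigma_i$ cancel exactly against the product-rule factors from $\sigma_1\sigma_2\sigma_3$, so that the final gradient depends only on $f(1,1,1)$ and not on $\nabla f(1,1,1)$. Once that cancellation is observed, everything else is immediate from first-order optimality.
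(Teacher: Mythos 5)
Your proof is correct and follows essentially the same route as the paper: apply first-order optimality to both $f$ and $f^{\mathrm{Sym}}$ at $(1,1,1)$, differentiate the symmetrization formula, and observe that the $\partial_i f(1,1,1)$ terms cancel, leaving $\nabla f^{\mathrm{Sym}}(1,1,1)=\tfrac12 f(1,1,1)(1,1,1)$ and hence $f(1,1,1)=0$. Your derivative bookkeeping matches the paper's computation exactly.
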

\begin{proof}
Structure preservation immediately implies $\nabla f(1,1,1)=(0,0,0)$ since $(1,1,1)$ is a local minimum.  Similarly, to favor isometry, we must have that $\nabla f^{\mathrm{Sym}}(1,1,1)=(0,0,0)$.  Taking the derivative of \eqref{eqn:symmetrized_sing} in one singular value $\sigma_i$, we find
$$
\frac{\partial f^{\mathrm{Sym}}}{\partial \sigma_i}
=
\frac12 \frac{\partial f}{\partial \sigma_i}
+ \frac12 
\left\vert
\prod_{j\neq i}\sigma_j
\right\vert
\left[
f\left(\frac1{\sigma_1},\ldots,\frac1{\sigma_n}\right)
-
\frac{1}{|\sigma_i|}\frac{\partial f}{\partial \sigma_i}\left(\frac1{\sigma_1},\ldots,\frac1{\sigma_n}\right)
\right].
$$
Substituting $\sigma_1=\cdots=\sigma_n=1$, 
$$
0=\frac{\partial f^{\mathrm{Sym}}}{\partial \sigma_i}
(1,\ldots,1)
=\frac12 
f\left(1,\ldots,1\right).
$$
This expression yields our first condition.
\end{proof}
The result above may feel somewhat counterintuitive, since constant shifts in $f$ affect whether $f$ favors isometry.  \resubb{But, adding a constant to $f$ changes the effect of the volume form on the distortion energy, explaining the result above.}

In Table~\ref{tab:energies}, we list several distortion functions $f(J)$, their equivalent forms in terms of the Jacobian $J$'s singular values $f(\singularvalues)$, and their symmetrized forms $f^{\mathrm{Sym}}(J),\, f^{\mathrm{Sym}}(\singularvalues)$. We check if the symmetrized distortion functions satisfy the isometry favoring property above by examining the behavior of $\singularvalues_{\min}$, the singular values that minimize $f^{\mathrm{Sym}}(\singularvalues)$. We verify the other properties in a similar way by studying $f(\singularvalues)$. Table~\ref{tab:energy-properties} summarizes the result.  Figure \ref{fig:levelsets} visualizes these properties by showing level sets of $f$ and $f^{\mathrm{Sym}}$ for examples drawn from Table~\ref{tab:energies}.

Tables~\ref{tab:energies} and~\ref{tab:energy-properties} reveal several valuable properties that can inform our choice of $f$. 
None of the distortion energies in Table~\ref{tab:energies} is symmetric in its standard form. A surprising result is that, after symmetrization, no distortion energy except for ARAP and Hencky strain favors isometry. Despite the fact that minimizing these energies in the forward or reverse direction independently would lead to an isometry, minimizing for the average of the two does not (see Fig.~\ref{fig:levelsets}). For example, the symmetric Dirichlet energy and the AMIPS energy after symmetrization prefer maps that tend to shrink $(\sigma_{\min}<1)$. We also observe that the symmetrized Dirichlet, \resub{the symmetrized \nth{3}-order Dirichlet}, and the symmetrized MIPS energies favor maps that collapse, that is, they are minimized close to $\mathbf{\sigma_{\mathrm{min}}}\approx (0,0,0)$. While the (asymmetric) Dirichlet energy favors maps with $\singularvalues=\mathbf 0$, the MIPS energy does not. \resub{The \nth{3}-order Dirichlet energy is used in 3D for $C^1$ continuity~\cite{iwaniec2010deformations}. }


From Table~\ref{tab:energy-properties}, only the symmetrized ARAP energy, which we will refer to as \emph{sARAP}, satisfies all the desired properties. To implement the sARAP energy, we optimize the average of the ARAP energy of the forward and reverse maps. This objective function has the added benefit of removing the requirement of a flip-free initialization, which is often not available for correspondence tasks. 

\resub{If $M_1$ and $M_2$ have different volumes, then the forward and backward terms in Eqs.~\eqref{eqn:m1-dist}, \eqref{eqn:m2-dist} might prefer distortion of one direction over another. In practice, we normalize our models to have volume $1$, so that the integrals in Eqs.~\eqref{eqn:m1-dist},\eqref{eqn:m2-dist} measure average local distortion of the two maps; \citet{schreiner2004inter} equivalently rescales the forward and backward terms.}



\begin{figure}
    \centering
    \begin{tabular}{cc}
    \includegraphics[height=.35\linewidth]{figures/energyplots/arap_symm.pdf} &
    \includegraphics[height=.35\linewidth]{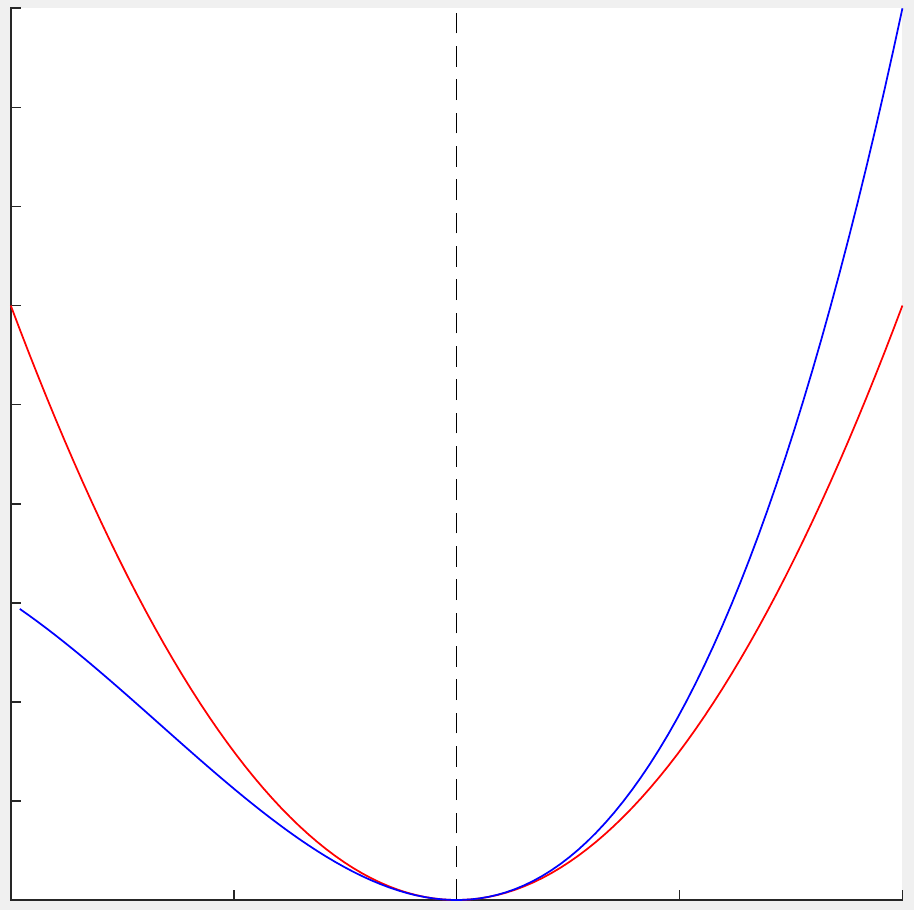}
    \\
    \includegraphics[height=.35\linewidth]{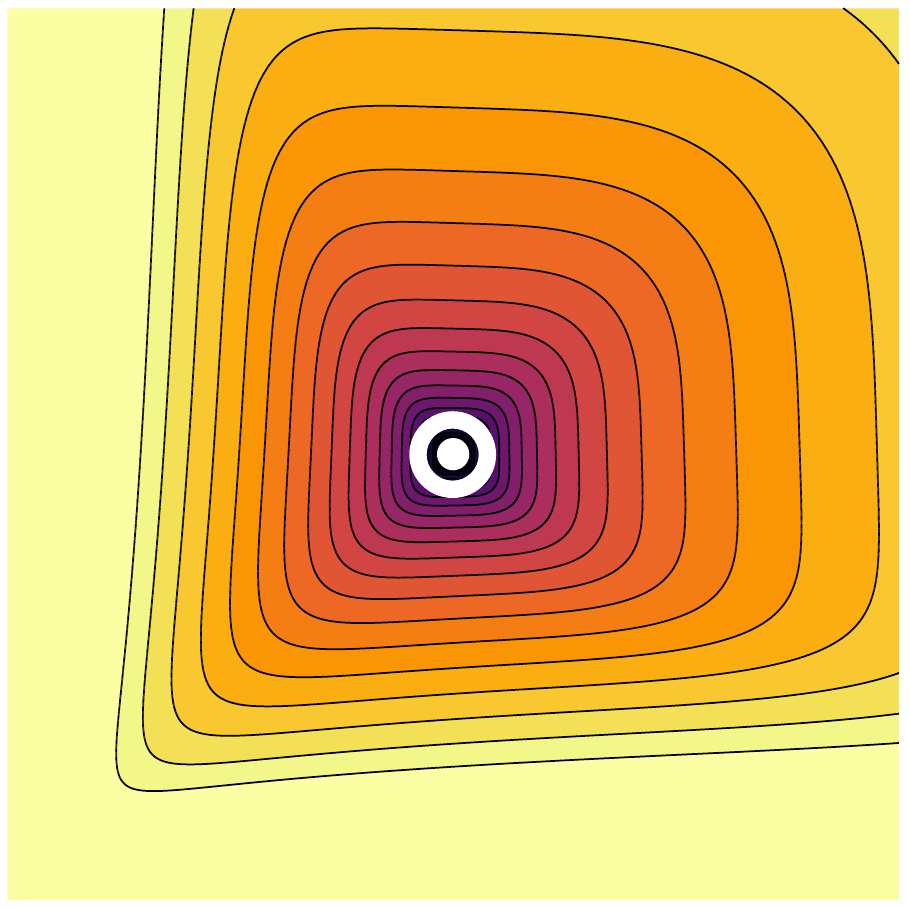} & 
    \includegraphics[height=.35\linewidth]{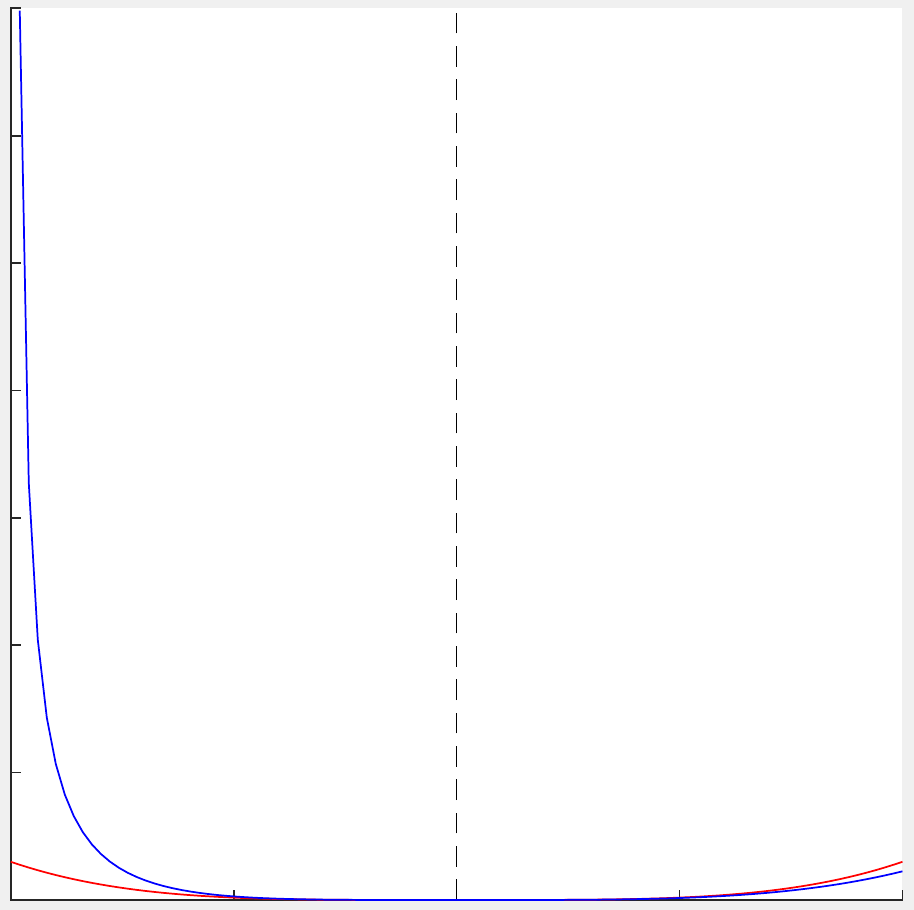}
    \\
    $f^{\mathrm{Sym}}(\sigma_1,\sigma_2,1)$ & {\color{red}$f(\sigma,\sigma,\sigma)$} and {\color{blue}$f^{\mathrm{Sym}}(\sigma,\sigma,\sigma)$}
    \end{tabular}
    \caption{Mathematical boundary case:  Comparison of symmetrized ARAP energy $\sum_i(\sigma_i-1)^2$ to symmetrized fourth-power ARAP energy $\sum_i(\sigma_i-1)^4$, using level sets similar to Figure \protect\ref{fig:levelsets} (left) and by plotting the diagonal where $\sigma=\sigma_1=\sigma_2=\sigma_3$ (right).  As discussed in \S\ref{sec:symm-design} (Remark), the fourth-power alternative blows up when approaching $(0,0,0)$ from \emph{any} direction, while conventional ARAP admits a path to $(0,0,0)$ where the energy density remains finite.}
    \label{fig:boundarycase}
\end{figure}

\begin{remark}[Avoiding zero singular values]
The symmetric Dirichlet energy \cite{smith2015bijective}, symmetric gradient energy~\cite{stein2021splitting}, and others used for bijective parameterization \emph{blow up} as singular values approach zero; this property provides a barrier ensuring existence of a locally-optimal parameterization without collapsed or inverted elements.  Our \emph{nonsingular} property actually prefers the opposite of this scenario, allowing inverted Jacobians so that we can recover from poor initialization, but this is a property of $f$---employed during optimization---rather than $f^{\mathrm{Sym}}$, the actual distortion energy being optimized in the symmetrized formulation.

A nonsingular $f$ can actually admit a function $f^{\mathrm{Sym}}$ that blows up as singular values approach $0$, as is the case for the ARAP and Dirichlet energies. This property suggests that even a nonsingular choice of $f$ can favor orientation-preserving symmetric maps.

For completeness, we note that $f_{\mathrm{ARAP}}^{\mathrm{Sym}}$ is not a \emph{perfect} barrier, in the following sense (also illustrated in Figure \ref{fig:boundarycase}):
For $\sigma_1=1$ and $\sigma_2, \sigma_3 \rightarrow 0$, we have $f_{\mathrm{ARAP}}^{\mathrm{Sym}}(\singularvalues) \rightarrow 1$.
This technicality can be addressed using an $f$ that grows faster than cubically in the singular values, e.g.\ $f(\singularvalues)=\sum_i(\sigma_i-1)^4$, but in practice such an adjustment did not yield better maps.
\end{remark}

\begin{remark}[Role of boundary conditions]
Several prior works \resub{optimize} symmetric energies \emph{without} the desired properties at the beginning of this section~\cite{ezuz2019reversible,schreiner2004inter,schmidt2019distortion}. 
Although their distortion energies do not promote isometry directly, these methods are still able to find low-distortion and even bijective correspondences. Indeed, the symmetrized energy analysis above does not tell the whole story. In particular, these methods \resub{include} energy terms, boundary conditions, and other constraints that favor bijectivity and semantic correspondences. \resub{These constraints counteract the energy's unexpected local properties and can affect the resulting map quality. For example, optimizing the symmetrized Dirichlet energy in the space of surjective or bijective maps will prevent the map from collapsing, but the map quality is essentially upheld by the boundary condition rather than the constitutive model used in the objective function.} We hypothesize that the success of these methods lies in balancing \resub{competing} terms and constraints. 
We leave detailed theoretical analysis of these intriguing global questions to future work.
\end{remark}


\subsection{Symmetric Optimization Problem}
\resub{Following the previous section's analysis, we revise the the generic formulation of our optimization problem in Eq.~\eqref{eq:opt-problem} to be symmetric. We optimize an energy of the form $\frac12E_f[\phi]+\frac12E_f[\psi]$, where we maintain separate estimates of the map $\phi:M_1\to M_2$ and its inverse $\psi\approx\phi^{-1}:M_2\to M_1$. This is done for practical reasons: The existence of a flip-free initial map is not guaranteed, so $\phi^{-1}$ may not exist to start. Additionally, this form is advantageous as $f$ is necessarily nonsingular for initializations with flipped elements, while $f^{\textrm{Sym}}$ can be orientation-preserving as is the case for sARAP. Finally, even if $f$ is not symmetric, the resulting energy is roughly of the form in Eq.~\eqref{eqn:symmetrized} and hence our analysis in \S\ref{s:symm-energy} applies. This leads to the modified problem:}

 \begin{equation}
\label{eq:opt-problem-sym}
\begin{aligned}
\argmin_{\phi,\psi} \quad &      \frac{1}{2}\int_{M_1} f_{\mathrm{ARAP}}(J_{\phi}(\mathbf{x}) )\, dV(\mathbf{x}) \\ 
&+ \frac{1}{2}\int_{M_2} f_{\mathrm{ARAP}}(J_{\psi}(\mathbf{y}) )\, dV(\mathbf{y}) + \mathrm{Reg}[\phi,\psi]\\
\textrm{subject to} \quad & \phi \in \mathcal{P} \, , \psi \in \mathcal{Q},\\
\end{aligned}
\end{equation}
where $\mathcal{Q}$ denotes the constraint $\psi(\partial M_2) \subset \partial M_1$. In practice, the constraints that define $\mathcal{P}$ and $\mathcal{Q}$ can be made soft and modeled in $\mathrm{Reg}[\phi,\psi]$. The estimate $\psi \approx \phi^{-1}$ can be enforced as a soft or hard constraint. In practice, we use a soft constraint modeled in $\mathrm{Reg}[\phi,\psi]$ as described in \S\ref{sec:objectiveterms}.



\begin{table*}
  \caption{Several distortion measures and their symmetrized forms. In this table, we consider orientation-preserving maps, so that $\vert \det J \vert = \det J$. We use an interior-point method constrained to search over non-negative $\sigma$ to compute the set of singular values $\singularvalues_{\min}$ that minimize the symmetrized energy $f^{\mathrm{Sym}}(\singularvalues)$.}
  \label{tab:energies}
  \scriptsize
  \begin{tabular}[0.9\textwidth]{|l|ccccc|}
    \hline
    Name & $f(J)$ & $f(\mathbf \sigma)$ & $f^{\mathrm{Sym}}(J)$ & $f^{\mathrm{Sym}}(\singularvalues$) & $\singularvalues_{\min}$\\ 
    \hline
    Dirichlet  & $\|J\|_F^2$ & $\sum_{i=1}^n \sigma_i^2$ 
    & $\frac{1}{2}\|J\|_F^2 + \frac{1}{2}\left( \det J \right) \left(\|J^{-1}\|_F^2\right)$ 
    & $\makecell{ \frac{1}{2}\sum_{i=1}^n\sigma_i^2 \\ + \frac{1}{2}\left(\prod_{j=1}^n\sigma_j\right) \left( \sum_{k=1}^n \sigma_k^{-2} \right)}$
    & $\approx (0,0,0)$\\ 
    \hline
    \resub{Dirichlet (\nth{3} order)}  & \resub{$\|J\|_F^3$ } & \resub{$\sum_{i=1}^3 \sigma_i^3$ }
    & \resub{$\frac{1}{2}\|J\|_F^3 + \frac{1}{2}\left( \det J \right) \left(\|J^{-1}\|_F^3\right)$} 
    & \resub{$\makecell{ \frac{1}{2}\sum_{i=1}^3\sigma_i^3 \\ + \frac{1}{2}\left(\prod_{j=1}^3\sigma_j\right) \left( \sum_{k=1}^3 \sigma_k^{-3} \right)}$}
    & \resub{$\approx (0,0,0)$}\\ 
    \hline
    Symmetric Dirichlet & $\|J\|_F^2 + \|J^{-1}\|_F^2$ &
    $\sum_{i=1}^n \left( \sigma_i^2 + \sigma_i^{-2} \right)$
    & $\makecell{\frac{1}{2}\left(\det J \, + 1\right) \left( \|J\|_F^2 + \|J^{-1}\|_F^2 \right)}$
    & $\makecell{ \frac{1}{2}\left(\prod_{i=1}^n\sigma_i \, + 1\right) \left( \sum_{j=1}^n \left( \sigma_j^2+\sigma_j^{-2} \right) \right)}$
    & $\approx(0.77,0.77,0.77)$ \\
    \hline
    MIPS (3D) & $\frac{1}{8}\left(\|J\|_F^2\cdot\|J^{-1}\|_F^2-1\right)$ & $\frac{1}{8} \prod_{i=1}^3 \left( \frac{\sigma_i}{\sigma_{i+1}}+\frac{\sigma_{i+1}}{\sigma_i}\right)$ 
    & $\makecell{ \frac{1}{16} \left(\det J \, + 1 \right) \left(\|J\|_F^2\cdot\|J^{-1}\|_F^2-1\right) }$
    &$\makecell{\frac{1}{16} \left( 1 +\prod_{i=1}^3 \sigma_i\right) \left ( \prod_{j=1}^3 \left( \frac{\sigma_{j+1}}{\sigma_{j}}+\frac{\sigma_{j}}{\sigma_{j+1}}\right) \right)}$
    & $\approx (0,0,0)$\\
    \hline
    AMIPS (3D)
    & $\makecell{\frac{1}{16}\left(\|J\|_F^2\cdot\|J^{-1}\|_F^2-1\right) \\ + \frac{1}{2}\left(\det J + \left(\det J\right)^{-1}\right)}$
    
    & $\makecell{\frac{1}{16} \prod_{i=1}^3 \left( \frac{\sigma_i}{\sigma_{i+1}}+\frac{\sigma_{i+1}}{\sigma_i}\right)\\
    + \frac{1}{2}\left(\prod_{j=1}^3 \sigma_j + \prod_{k=1}^n\sigma^{-1}_k \right)}$
    
    & $\makecell{\frac{\det J \, + 1}{32}\left(\|J\|_F^2\cdot\|J^{-1}\|_F^2-1\right) \\ + \frac{1}{4}\left (\det J + \left(\det J\right)^{-1} \right)\\
      + \frac{1}{4}\left( \left(\det J\right)^2 + 1\right )}$
    
    & $\makecell{\frac{1}{32} \left( 1 + \prod_{i=1}^3 \sigma_i \right) \left(  \prod_{j=1}^n \left( \frac{\sigma_j}{\sigma_{j+1}}+\frac{\sigma_{j+1}}{\sigma_j}\right) \right) \\
    + \frac{1}{4}\left(\prod_{k=1}^3 \sigma_k + \prod_{l=1}^n\sigma^{-1}_l\right) \\
    + \frac{1}{4}\left(\prod_{m=1}^3\sigma_m^2 \right)}$
    & $\approx (0.8,0.8,0.8)$ \\
    \hline
    Conformal AMIPS & $\frac{\mathrm{tr}\left(J^TJ\right)}{\left(\det J\right)^{\frac{2}{3}}} $
    
    &$\left( \prod_{j=1}^3\sigma_j^{-\frac{2}{3}}\right)\left(\sum_{i=1}^3\sigma_i^2 \right)$
    
    &  $\makecell{\frac{1}{2}\left(\det J\right)^{-\frac{2}{3}}\mathrm{tr}\left(J^TJ\right) \\
   +  \frac{1}{2} \left( \det J \right)^{\frac{1}{3}}\mathrm{tr}\left(J^{-T}J^{-1}\right)}$
    & $\makecell{\frac{1}{2}\left(\prod_{i=1}^3\sigma_i^{-\frac{2}{3}}\right)\left(\sum_{j=1}^3\sigma_j^2 \right)\\
    +\frac{1}{2} \left( \prod_{k=1}^3\sigma_k^{-\frac{1}{2}} \right) \left( \sum_{l=1}^3\sigma_l^{-2} \right)  }$
    &  $\approx (0.032,0.032,0.032)$\\
    \hline
    Symmetric 
    gradient & $\frac{1}{2}\|J\|_F^2 - \log\left(\det J\right) $ 
    &$\frac{1}{2}\sum_{j=1}^n\sigma_j^2 - \log\left(\prod_{i=1}^n\sigma_i\right)$ 
    
    &$\makecell{\frac{1}{4}\|J\|_F^2 - \frac{1}{2}\log\left(\det J\right) \\
    +\frac{1}{4}\det J \cdot \|J^{-1}\|_F^2 \\
    + \frac{1}{8}\det J \cdot \log\left( \det J\right)}$
    
    & $\makecell{\frac{1}{4}\sum_{i=1}^n\sigma_i^2  - \frac{1}{2} \log\left(\prod_{j=1}^n \sigma_j \right) \\
    +\frac{1}{4}\left( \prod_{k=1}^n\sigma_k \right) \Big [\sum_{l=1}^n\sigma_l^{-2} \\
    + \frac{1}{2} \log \left( \prod_{m=1}^n\sigma_m \right) \Big ]}$
    & $\approx(0.61,0.61,0.61)$\\
    \hline
    Hencky strain
    &$\makecell{\|\log J^T J\|_F^2}$
    & $\sum_{i=1}^n\log^2(\sigma_i)$
    & $\makecell{\frac{1}{2}\|\log J^T J\|_F^2\\
    +\frac{1}{2}\det J \cdot \|\log J^{-T}J^{-1}\|_F^2}$
    &  $\makecell{\frac{1}{2}\sum_{i=1}^n\log^2(\sigma_i)\\
    + \frac{1}{2}\left( \prod_{j=1}^n\sigma_j \right) \left( \sum_{k=1}^n\log^2(\sigma_k) \right)}$
    & $(1,1,1)$\\
        \hline
    ARAP &  $\|J-R\|_F^2$ & $\sum_{i=1}^n (\sigma_i-1)^2$ & 
    $\makecell{\frac{1}{2}\|J-R\|_F^2 \\
    + \frac{1}{2} \det J \cdot \|J^{-1}-R\|_F^2}$
    &$\makecell{\frac{1}{2}\sum_{i=1}^n (\sigma_i-1)^2 + \\
    \frac{1}{2}\left( \prod_{j=1}^n\sigma_j \right) \left(\sum_{k=1}^n (\sigma_k^{-1}-1)^2\right)}$
    &  $(1,1,1)$\\
    \hline
  \end{tabular}
\end{table*}

\section{Discretization and Model}
We build on our analysis in \S\ref{s:symm-energy} and \S\ref{sec:symm-design} to discretize the optimization problem in Eq.~\eqref{eq:opt-problem-sym} and develop an algorithm to compute a volumetric map that is invariant to the ordering of the source and target shapes. \resubb{In this section, we define our map discretization \resub{and map constraints, and develop the objective function used in the optimization.}}

\subsection{Notation}

We represent volumetric shapes as tetrahedral meshes. We let $\mathcal{V}_i$, $\mathcal{E}_i$, $\mathcal{F}_i$, $\mathcal{T}_i$ denote the sets of vertices, edges, faces, and tetrahedra of mesh $M_i$, for $i \in \{1,2\}$. We represent the coordinates of $\mathcal{V}_i$ as a matrix $V_i \in \R^{n_i \times 3}$, where $n_i$ denotes the number of vertices in mesh $M_i$. We represent tetrahedron $k$ in mesh $i$ as the matrix $V_i^{T_k} \in \R^{4\times 3}$  whose rows are the coordinates of the vertices of tetrahedron $k$. We use $\partial$ to denote the boundary of a mesh, and $\partial \mathcal{V}_i, \partial \mathcal{E}_i, \partial \mathcal{F}_i, \partial \mathcal{T}_i$ denote sets of boundary vertices, edges, faces, and tetrahedra, respectively. Boundary tetrahedra are those that contain one or more boundary faces.

\begin{table}[t]
  \caption{Summary of distortion energy function properties\vspace{-.15in}}
  \label{tab:energy-properties}
  \small
  \begin{tabular}{|l|ccc|}
    \hline
    Name & \makecell{Favors\\isometry} & \makecell{Preserves\\structure} & \makecell{Nonsingular\\ \ } \\
    \hline
    Dirichlet & \xmark & \xmark & \cmark\\
    \resub{Dirichlet (\nth{3} order)} & \xmark & \xmark & \cmark \\
    Symm.\ Dirichlet & \xmark & \cmark & \xmark \\
    MIPS (3D) &  \xmark & \cmark & \xmark  \\
    AMIPS (3D) & \xmark & \cmark & \xmark \\
    Conformal AMIPS & \xmark & \cmark & \xmark \\
    Symm.\ Gradient & \xmark & \cmark & \xmark \\
    Hencky strain & \cmark & \cmark & \xmark \\
    \textbf{ARAP} & \cmark & \cmark & \cmark \\
    \hline
  \end{tabular}
\end{table}


\resubb{We use a piecewise linear discretization to model the maps $\phi$ and $\psi$, with each tetrahedron being mapped affinely. }
The map on each tetrahedron is determined by its transformed vertex coordinates.
We use matrix $X_i \in \R^{n_1 \times 3}$ to denote the coordinates of the transformed vertices of mesh $M_i$, and $X_i^{T_k} \in \R^{4 \times 3}$ to denote the transformed tetrahedron $k$ of mesh $M_i$. The Jacobian matrix
\begin{equation}
    \label{eqn:jacobian}
    J(X_i^{T_k}) = \left(BX_i^{T_k}\right)\left(BV_i^{T_k}\right)^{-1}
\end{equation}
\resubb{defines the map differential of tetrahedron $k$ based on the transformed coordinates $X_i^{T_k}$.} The constant matrix $B \in \R^{3 \times 4}$ extracts vectors parallel to the edges of the tetrahedron. 

\subsection{Map Representation}

We wish to constrain each map to lie within the target shape, i.e., $\phi(M_1)\subset M_2$ and $\psi(M_2) \subset M_1$. We extend the strategy of~\citet{ezuz2019reversible} to tetrahedral meshes to enforce these constraints. 

We represent the map $\phi$ as a matrix $P_{12}\in [0,1]^{n_1 \times n_2}$ and the map $\psi$ as $P_{21} \in [0,1]^{n_2 \times n_1}$. \resub{Matrices $P_{12}$ and $P_{21}$ use barycentric coordinates to encode the vertex-to-tetrahedron map and ensure the mapped vertices lie in the target mesh.} \resubb{This representation is also beneficial to map between meshes with differing connectivity. } 
 Suppose $P_{12}$ maps vertex $i$ of mesh $M_1$ into tetrahedron $T_k=(a,b,c,d)\in \mathcal T_2$ in mesh $M_2$,
where $(a,b,c,d) \in \{1,\ldots,n_2\}$ are the indices of the vertices of $T_k$. 
Then, row $i$ of $P_{12}$ contains the barycentric coordinates of the image of vertex $i$ in columns $a,b,c,d$, and zeros elsewhere. Map $P_{21}$ is constructed analogously. 
We can enforce the constraint that boundary vertices are mapped to boundary faces by constraining the sparsity patterns of $P_{12}$ and $P_{21}$. \resubb{A limitation in the discretization is that we are unable to enforce that the interior of boundary faces and edges are mapped inside the target shape, since our map representation is vertex-based. In practice, this effect is minimized using high-resolution meshes}.

We denote the set of all feasible maps satisfying the boundary constraints as $\mathcal P_{ij}^\star$; we use $\mathcal P_{ij}$ to denote the set of feasible maps that may map the boundary $\partial M_i$ to the interior of $M_j$.


We use half-quadratic splitting~\cite{geman1995half} to express our problem in a form that is amenable to efficient optimization~\cite{ezuz2019reversible,wang2008new,zoran2011learning}. In particular, we introduce the auxiliary variable $X_{ij}$ to model the image of vertices $\mathcal{V}_i$ under the map to mesh $M_j$, where $X_{ij} \approx P_{ij} V_j$. 

\subsection{Objective Terms}\label{sec:objectiveterms}
\resubb{We define several objective terms used to find the correspondence and model the soft constraints on the map.}

\subsubsection{Auxiliary and reversibility energy functions.} Our first two terms are adapted from~\citet{ezuz2019reversible} and extended for volumetric meshes. The first term is the auxiliary energy that encourages $X_{ij} \approx P_{ij}V_j:$
\begin{equation}
    E_Q[P_{12}, P_{21}, X_{12}, X_{21}] = \sum_{\substack{i,j \in \{1,2\} \\  i\neq j }} \frac{1}{c_i c_j}\left \|X_{ij} - P_{ij}V_j \right \|_{M_i}^2,
\end{equation}
where $c_i$, $c_j$ are the total volumes of meshes $M_i$ and $M_j$, and ${\|\cdot\|_{M_i}^2}$ denotes the Frobenius norm with respect to $M_i$. For a matrix $G$, $\|G\|_{M_i}^2 = \mathrm{tr}(G^TC_iG)$, where $C_i$ is the lumped diagonal vertex mass matrix of $M_i$.

The second term is the reversibility energy that encourages bijectivity:
\begin{equation}
\label{reversibility-energy}
    \begin{aligned}
    E_R[P_{12},P_{21},X_{12},X_{21}] &= \sum_{\substack{i,j \in \{1,2\} \\  i\neq j }}\frac{1}{c_i^2}\|P_{ij}X_{ji} - V_i\|_{M_i}^2.
    \end{aligned}
\end{equation}
This energy measures the distance between the original vertex positions $V_i$ and the back projection of their image under the map $P_{ij}$, $X_{ij}$.

\begin{figure}[b]
    \centering
    \begin{tabular}{cc}
    \includegraphics[width=.45\linewidth]{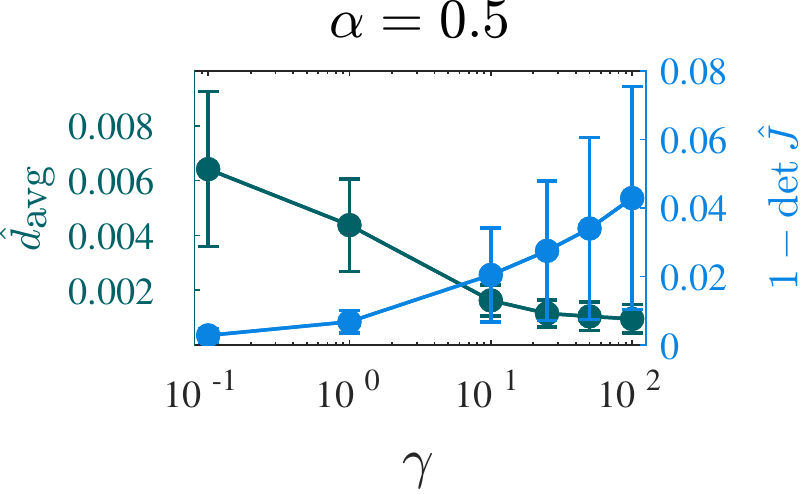} &
    \includegraphics[width=.45\linewidth]{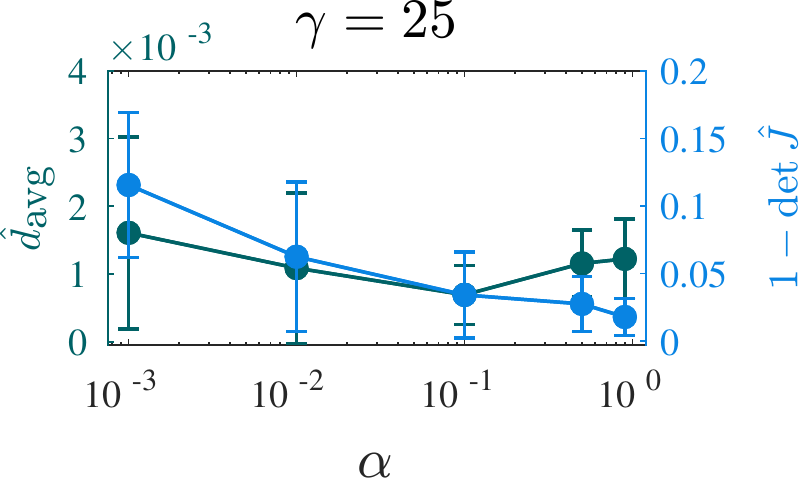}
    \\
    {\captionsize search over $\gamma$} &{\captionsize search over $\alpha$}
    \end{tabular}\vspace{-.1in}
    \caption{Parameter sweep over $\gamma$ and $\alpha$, comparing the tradeoff between $\hat{d}_{avg}$ and \resub{$1-\det\hat{J}$, where $\det\hat{J}$ is the normalized determinant of the Jacobian}. We select $\alpha=0.5$, $\gamma=25$ as they achieve a reasonable tradeoff between conforming to the target boundary while \resub{maintaining map quality}.}
    \label{fig:paramsweep}
\end{figure}

\subsubsection{ARAP energy}
Central to the computation of a volumetric map is the proper selection of a distortion energy. From our analysis in \S\ref{sec:symm-design}, we select the sARAP energy as it is both symmetric and promotes rigidity.

We use $\frac{1}{2} E_{ARAP}[\phi] + \frac{1}{2} E_{ARAP}[\psi]$ to approximate $E_{sARAP}[\phi]$. We approximate the integral over the volumetric domain by measuring the distortion energy per tetrahedron. For tetrahedron $k$ of mesh $i$, the ARAP distortion function is given by
\begin{equation}
\label{eqn:arap-energy}
\begin{aligned}
    f_{ARAP}\left(J\left(X_{ij}^{T_k}\right)\right) 
    &= \sum_{j=1}^3 (\sigma_{k,j} - 1)^2,\\
\end{aligned}
\end{equation}
where $\sigma_{k,j}$ is the $j^{th}$ \emph{signed} singular value of $J(X_{ij}^{T_k})$. We use the convention laid out by~\citet{irving2004invertible} to define the signed singular value decomposition unambiguously. For $J=U\Sigma V^T$, this convention allows the sign of the smallest singular value $\sigma_{min}$ to be negative, $\mathrm{sign}(\sigma_{min})=\mathrm{sign}(\det J)$, and $U,V \in \mathrm{SO}(3)$.

The total ARAP energy is then
\begin{equation}
    \label{eqn:total-arap}
    \begin{aligned}
    E_{ARAP}\left[X_{12},X_{21}\right] = \sum_{\substack{i,j \in \{1,2\} \\  i\neq j }} \frac{1}{2c_i}\sum_{T_k \in \mathcal{T}_i}v(T_k)f_{ARAP} \left(J\left(X_{ij}\right)^{T_k}\right),
    \end{aligned}
\end{equation}
where $v(T_k)$ denotes the volume of tetrahedron $k$.

\subsubsection{Projection Energy}
We encourage preserving the boundary of the source and target meshes by using forward and backward projection energies. We compute the forward projection energy $E_{P,f}$ as
\begin{equation}
    \label{eqn:proj-fwd}
    E_{P,f}[X_{12},X_{21}] = \sum_{\substack{i,j \in \{1,2\} \\  i\neq j }} \frac{1}{s_i} \left \|\left(X_{ij}\right)_{\partial M_i}-\mathrm{proj}\left((X_{ij})_{\partial M_i},\partial M_j\right) \right\|_{\partial M_i}^2,
\end{equation}
where $\mathrm{proj}\left(\left(X_{ij}\right)_{\partial M_i},\partial M_j\right)$ denotes the Euclidean projection of the boundary vertices of $\partial M_i$ with coordinates $X_{ij}$ onto the boundary mesh $\partial M_j$, $s_i$ denotes the total surface area of $\partial M_1$ and $\|\cdot\|_{\partial M_i}^2$ denotes the Frobenius norm with respect to boundary triangle mesh $\partial M_i$.

The backward projection energy $E_{P,b}$ is given by
\begin{equation}
    \label{eqn:proj-rev}
    E_{P,b}[X_{12},X_{21}] = \sum_{\substack{i,j \in \{1,2\} \\  i\neq j }} \frac{1}{s_i} \left \|V_i-\mathrm{proj}\left(V_i,\partial F_j \left( X_{ji}\right)\right) \right\|_{\partial M_i}^2,
\end{equation}
where $\partial F_j \left( X_{ji}\right)$ denotes the boundary of mesh $M_j$ with vertices given by $X_{ji}$.

The full projection energy is then
\begin{equation}
    E_P[X_{12},X_{21}] = E_{P,f}[X_{12},X_{21}] + E_{P,b}[X_{12},X_{21}].
\end{equation}

\begin{figure*}
    \includegraphics[width=0.9\textwidth]{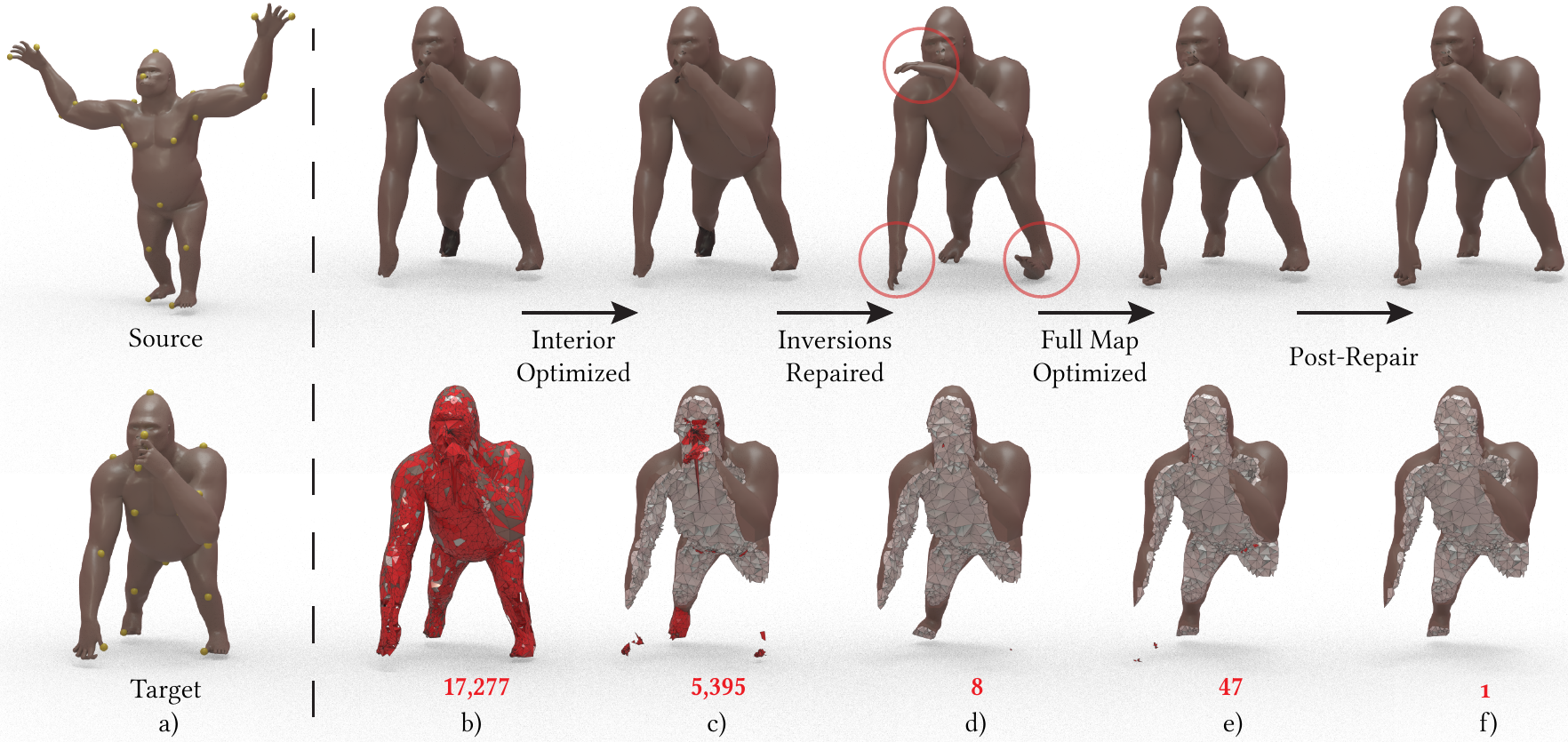}
    \caption{Flowchart depicting each step of our method: a) initial source and target shapes, with landmarks shown as yellow spheres; mapped shape; b) at initialization; c) after optimization converges while keeping the boundary fixed; d) after tetrahedron inversion repair; e) at convergence; \resub{and f) after post-convergence tetrahedron repair}. Top row shows the boundary of the mapped shape at every step and the bottom row shows a cut through the interior, revealing interior tetrahedra. Inverted and collapsed tetrahedra are red. The number of inverted tetrahedra is listed under each cut-through mesh. Our initial map b) has all interior tetrahedra collapsed to the boundary, resulting in 17,277 (46\%) degenerate or flipped tetrahedra. Steps c) and d) optimize and repair the interior, resulting in \resub{8} flipped tets. The tetrahedron repair step restores elements of the map to match the source, as the hands and feet rotate. The final optimization \resub{followed by the post-convergence repair} produces a map that closely matches the boundary with \resub{negligible inversions (1 flipped tetrahedron)}. }
    \label{fig:flowchart}
\end{figure*}

\subsection{Optimization Problem}

\resubb{Combining the distortion and regularization energies, our optimization problem becomes}
\begin{equation}
\label{eq:opt-problem-discrete}
\begin{aligned}
\argmin_{P_{12},P_{21},X_{12},X_{21}} \quad & {E}[P_{12},P_{21},X_{12},X_{21}]\\
\textrm{subject to} \quad & P_{12} \in \mathcal{P}_{12} \, , P_{21} \in \mathcal{P}_{21},\\
\end{aligned}
\end{equation}
where
\begin{equation}
\begin{aligned}
\label{eq:total-e-halfquad}
    E[P_{12},P_{21},X_{12},X_{21}] &= \\
    \sum_{\substack{i,j \in \{1,2\} \\  i\neq j }}\alpha E_{ARAP}[X_{ij}] &+ (1-\alpha)E_{R}[P_{ij},X_{ji}] \\&
    + \gamma E_P[X_{ij}] + \beta E_Q[X_{ij},P_{ij}].
    \end{aligned}
\end{equation}

Several parameters govern the strength of the distortion energies and soft constraints. The parameter $\alpha \in [0,1]$ models the tradeoff between a reversible map (small $\alpha \rightarrow 0$) and one that maintains the rest shape ( $\alpha \rightarrow 1$). The parameter $\gamma \in \R_{\geq 0}$ weighs the projection term that models the soft constraint for matching to the target boundary. The parameter $\beta$ controls the soft constraint on the auxiliary variables. As recommended by~\cite{ezuz2019reversible,wang2008new}, $\beta$ should use an update schedule tailored per application. In our experiments, since we start with a coarse initialization of the interior, we initialize $\beta = 0.25$ and increase $\beta$ linearly to $5$ over $20$ iterations. We found our approach to be insensitive to the update schedule. 

In this formulation, we use a soft constraint measured by $E_P$ to map to the target boundary. While we could use a hard constraint by setting $\gamma=0$ and requiring $P_{12} \in \mathcal{P}^{\star}_{12},P_{21} \in \mathcal{P}_{21}^{\star}$, we did not find that this hard constraint had a substantial effect on our final output.

\begin{figure}[t]
    \centering
    \includegraphics[width=0.9\linewidth]{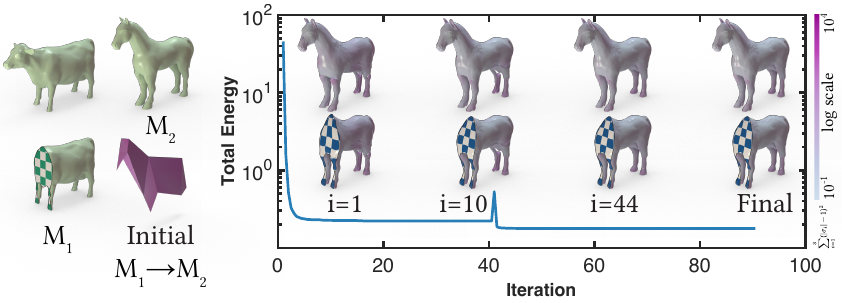}
    \caption{\resubb{Optimization of Eq.~\eqref{eq:opt-problem-discrete} using a landmark initialization. Despite a coarse initialization, our algorithm approximates the target shape after one iteration. Further optimization decreases surface distortion and improves interior regularity as visualized by the checkerboard patterns. At iteration $41$, the inverted tetrahedron repair is performed, causing a jump in the projection energy $E_P$, from which our algorithm quickly recovers.}}
    \label{fig:optimization-graph}
\end{figure}

\section{Optimization}

In this section, we outline our optimization procedure. We discuss strategies for initializing the map and propose an approach to uninvert tetrahedra. We conclude by presenting our algorithm for minimizing Eq.~\eqref{eq:opt-problem-discrete} using block coordinate descent.

\subsection{Initialization}\label{sec:initialization}

Objective function~\eqref{eq:opt-problem-discrete} includes four variables:  $P_{12},$ $P_{21},$ $X_{12},$ and $X_{21}$.  In this section, we provide strategies for initializing the variables $P_{ij}$ before running our optimization procedure. We initialize the $X_{ij}$ variables via $X_{ij} \gets P_{ij}V_j$. 

\paragraph*{Landmark-based initialization.} If we are given landmark pairs $(\mathbf{p}_{i},\mathbf{q}_{i})$, where $\mathbf{p_{i}} \in M_1, \mathbf{q_{i}} \in M_2$, we can initialize each landmark's target by copying the target of its closest landmark. 


\paragraph*{2D surface map initialization.} A second approach is to initialize the boundaries of $M_1,M_2$ using an existing surface-to-surface mapping approach. We initialize the interior vertices identically to landmark-based initialization, where we consider \emph{every} boundary vertex to be a landmark.

\resubb{We do not hold the landmark or surface map vertices fixed during the optimization.}




\subsection{Alternating Minimization}\label{sec:alternating}
We use coordinate descent, alternating between optimizing over $X_{ij}$ and $P_{ij}$. Our multi-step optimization procedure ensures strong conformation to the boundary while avoiding inverted tetrahedra. 

\paragraph*{Optimizing for $X_{ij}$} Optimizing for $X_{ij}$ while holding the $P_{ij}$ variables fixed is a smooth optimization problem, for which we use the Limited-memory Broyden–Fletcher–Goldfarb–Shanno (L-BFGS) algorithm~\cite{zhu1997algorithm}.

We compute the gradient of each energy term in Eq.~\eqref{eq:total-e-halfquad}. The gradients for $E_P, E_Q$ are straightforward as they are matrix norms. We compute the gradient of $E_{ARAP}$ using the chain rule. First, we compute the gradient of $f_{ARAP}(J)$ with respect to a Jacobian  $J$,
$\nabla_J f_{ARAP}[J]=U\mathrm{diag}\left(\nabla_{\sigma}f_{ARAP}\left(\sigma\right)\right)V^T.$ Using the chain rule, we then compute the gradient with respect to the elements of tetrahedron $T_k \in \mathcal{T}_i$, with coordinates $X_{ij}^{T_k}$,
\begin{equation}
\frac{\partial f_{ARAP}(X_{ij}^{T_k})}{\partial (X_{ij}^{T_k})} = \left(\left(BV_i^{T_k} \right)^{-T} B \right)\left(U\mathrm{diag}\left(\nabla_{\sigma}f_{ARAP}\left(\sigma\right)\right)V^T\right)^T
\end{equation}

The gradient with respect to each vertex is found by gathering the gradients of each tetrahedron adjacent to that vertex.

\paragraph*{Optimizing for $P_{ij}$} Fixing $X_{12},X_{21}$, the remaining energy terms with respect to $P_{ij}$ are of the form $\|P_{ij}A-B\|_{M_i}^2$ with $A \in \R^{n_j \times 6},B \in \R^{n_i \times 6}$. Following~\citet{ezuz2019reversible}, this minimization can be understood as a \emph{projection} problem solved independently for each row of $P_{ij}$. 

In our case, we need to project the points in $A$ to the $6-$dimensional tetrahedral mesh with vertices $B$, whose connectivity is the same as $M_j$. The presence of several additional energy terms in our formulation also leads to a unique projection problem. Since the problem can be solved independently, we implement an efficient solution using CUDA programming.
To enforce a hard boundary-to-boundary constraint, we map rows of $A$ corresponding to the boundary of $M_i$ to the boundary of the target embedding. 


\subsection{Inverted Tetrahedron Repair}\label{sec:repair}

The initial maps suggested in \S\ref{sec:initialization} are straightforward to compute, but they are quite distant from our desired output; indeed, the majority of tetrahedra in our initial maps have zero volume. Although alternating between the two steps above is guaranteed to decrease the objective function in each step, empirically we find in the initial stages our algorithm can get stuck in local optima due to inverted elements.  Here, we describe a heuristic strategy that empirically can improve the quality of our output.

In this tetrahedron repair step, we find all inverted tetrahedra. We then take the $1-$ring neighborhood of the vertices in the inverted tetrahedra and use L-BFGS to minimize  $f_{ARAP}$ with the remaining vertices fixed. 
%



\subsection{Full Algorithm and Stopping Criteria} 

Overall, our optimization procedure follows four broad steps:
\begin{enumerate}
    \item map initialization (\S\ref{sec:initialization});
    \item optimization while keeping the boundary fixed (\S\ref{sec:alternating});\label{stage2}
    \item inverted tetrahedron repair (\S\ref{sec:repair});
    \item optimization of all vertices (\S\ref{sec:alternating}); and\label{stage4}
    \item \resub{post-convergence inverted tetrahedron repair (\S\ref{sec:repair})}\label{stage5}.
\end{enumerate}

For stages \ref{stage2} and \ref{stage4}, we set as our convergence criteria one of (i) the norm of the gradient $<10^{-6}$, (ii) the objective function decreases by less than $10^{-7}$ between successive iterations, or (iii) run for $50$ iterations; the third criterion is a fallback that rarely occurs in practice. \resub{For stage \ref{stage5}, we limit vertex displacement to preserve map quality by limiting to 100 steps of L-BFGS} \resubb{and we restrict optimization to only vertices in inverted tetrahedra. }

Algorithm \ref{alg:mapping} summarizes our full procedure.

\begin{algorithm}[t]
\caption{Coordinate decent with tetrahedra uninversion}\label{alg:mapping}
\begin{flushleft}
        \textbf{Input:} initial maps  $P_{12},P_{21}$\\
        \textbf{Output:} \resub{optimized maps $X_{12}$, $X_{21}$, $P_{12}$, $P_{21}$ }
\end{flushleft}
\begin{algorithmic}[1]
\State $\partial P_{12}^{(0)} \gets P_{12}(\partial V_1,:)$ \textit{// initial boundary map}
\State $\partial P_{21}^{(0)} \gets P_{21}(\partial V_2,:)$
\State $X_{12}\gets P_{12}V_2$ \textit{// initial vertex map}
\State $X_{21}\gets P_{21}V_1$
\\
\While{!converged} \textit{// optimize boundary map}
    \For{$(i,j)\in\{ (1,2), (2,1) \}$}
        \State $P_{ij} \gets \argmin_{P \in \mathcal{P}_{ij}} \bar{E}_R[P,X_{ji}] + \bar{E}_Q[P,X_{ij}]$
        \State $X_{ij} \gets \argmin_{X \in \R^{n_i \times 6}} \bar{E}_{ARAP}[X_{ij}]$ \par \hskip\algorithmicindent \hskip\algorithmicindent \hskip\algorithmicindent $+ \bar{E}_R[X_{ij},P_{ji}] + \bar{E}_P[X_{ij}] + \bar{E}_Q[X_{ij},P_{ij}]$
        \State $\partial P_{ij} \gets \partial P_{ij}^{(0)}$ \textit{// restore boundary}
    \EndFor
\EndWhile
\\\ \\
\textit{// inverted tetrahedron repair}
\State $\textrm{idx} \gets \det J(X_i^{T_k})\leq 0, \forall T_k \in \mathcal{T}_i$ \textit{// find inverted tetrahedra}
\State $X_{ij}(idx) \gets \argmin_{X \in \R^{n_i \times 6}} \bar{E}_{ARAP}[X_{ij}(idx)]$ \textit{// 1-ring nbhd.}
\\
\While{!converged} \textit{// optimize full map}
    \For{$(i,j)\in\{ (1,2), (2,1) \}$}
        \State $P_{ij} \gets \argmin_{P \in \mathcal{P}_{ij}} \bar{E}_R[P,X_{ji}] + \bar{E}_Q[P,X_{ij}]$
        \State $X_{ij} \gets \argmin_{X \in \R^{n_i \times 6}} \bar{E}_{ARAP}[X_{ij}]$ \par \hskip\algorithmicindent \hskip\algorithmicindent \hskip\algorithmicindent $+ \bar{E}_R[X_{ij},P_{ji}] + \bar{E}_P[X_{ij}] + \bar{E}_Q[X_{ij},P_{ij}]$
    \EndFor
\EndWhile
\end{algorithmic}
\end{algorithm}

\subsection{Implementation Details}

Unless otherwise noted, all figures are generated using identical parameters.  
We use grid search to identify reasonable parameters; the results of our analysis are provided in Fig.~\ref{fig:paramsweep}. We set the rigidity parameter $\alpha=0.5$ and the boundary conformation parameter $\gamma=25$, achieving a reasonable trade off between average distance to the target \resub{and maintaining per-tetrahedron map quality as measured using $\det\hat{J}$, the normalized Jacobian determinant}. To find these values, we initialize $\beta=0.25$ and increase linearly to $\beta=5$ over 20 iterations. In practice, we found our method was insensitive to the choice of $\beta$.

We generate tetrahedral meshes using fTetWild~\cite{hu2020ftetwild}. Prior to mapping, we normalize each mesh to have volume 1.  We perform one tetrahedron repair step as we found negligible improvement after performing more. 

We implement our method in MATLAB, using CUDA to optimize the projection step by extending the projection code in~\cite{li2021interactive} to $\R^6$. Our code and data are available at \url{https://github.com/mabulnaga/symmetric-volume-maps}.
\section{Experiments}
We measure map quality by assessing distortion and closeness to matching the target shapes (\S\ref{sec:qualitymetrics}). We validate our method by mapping pairs of shapes from four datasets (\S\ref{sec:datasets}) and report visualizations and numerical scores evaluating the result (\S\ref{sec:validation}). \resub{We also compare our method to several variants of a baseline mapping approach (\S\ref{sec:baseline-comparison}).} \resub{We test the robustness of our method in \S\ref{sec:robust}} and evaluate the choice of symmetrized energy on computing a map in \resub{\S\ref{sec:symmetrized_energy_choice}.}

\begin{figure}
    \includegraphics[width=\linewidth]{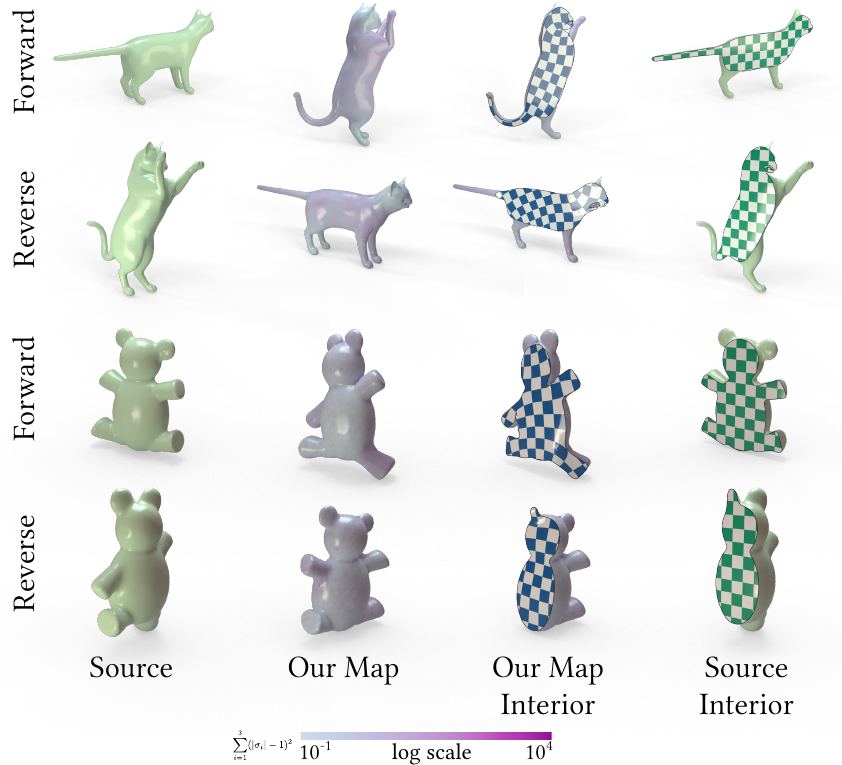}
    \caption{Forward and reverse maps on related pairs of shapes. We observe smooth patterns of distortion on the boundary while capturing distinguishing geometric features, such as the \resubb{transformation} of the tail of the cat and movement of the bear's ears. Distortion is uniform throughout the interior.}
    \label{fig:checkeriso}
\end{figure}

\begin{table}[bp]
\scriptsize
  \caption{Map Quality Evaluation }\vspace{-.15in}
  \label{tab:map-comparison}
  \begin{tabular}[0.3\textwidth]{|lccccccc|}
    \hline
    \makecell{Map\\ (Initialization)} & \makecell{Time \\ (min.)} & \makecell{$E_R$\\ $(\times 10^{-3})$} & \makecell{$E_{ARAP}$\\ $(\times 10^{-3})$} &$n_{inv}$ & \makecell{$\hat{d}_{max}$\\($\times 10^{-2}$)} & \makecell{$\hat{d}_{avg}$\\($\times 10^{-2}$)} &$\det\hat{J}$\\
    \hline
    \makecell{$X_{ij}$ \\ (Surface)} & \makecell{$31$ \\ $\pm21$} & \makecell{$1.47$ \\ $\pm 1.9$} & \makecell{$81.7$ \\ $\pm 78.5$}& \makecell{$7.7$ \\ $\pm9.1$} &\makecell{$2.5$ \\ $\pm1.2$} &\makecell{$0.10$ \\ $\pm0.046$} &\makecell{$0.98$ \\ $\pm0.02$} \\ \hline
    
   \makecell{$P_{ij}$ \\ (Surface)} & \makecell{$31$ \\ $\pm21$} &  \makecell{$1.29$ \\ $\pm 1.65$} &  \makecell{$134.5$ \\ $\pm 115.4$} & \makecell{$649$ \\ $\pm549$} &\makecell{$1.9$ \\ $\pm0.78$} &\makecell{$0.072$ \\ $\pm0.028$} &\makecell{$0.96$\\ $\pm0.04$} \\ \hline
   
  \makecell{$X_{ij}$ \\ (Landmark)} & \makecell{$107$ \\ $\pm 53$} & \makecell{$7.45$ \\ $\pm 10.7$} &  \makecell{$93.6$ \\ $\pm 73.3$} & \makecell{$15.8$ \\ $\pm10.9$} &\makecell{$2.7$ \\ $\pm1.0$} &\makecell{$0.12$ \\ $\pm0.046$} &\makecell{$0.97$\\ $\pm0.02$}\\ \hline
  
  \makecell{$P_{ij}$ \\ (Landmark)} & \makecell{$107$ \\ $\pm 53$} & \makecell{$6.67$ \\ $\pm 9.7$} &  \makecell{$176.6$ \\ $\pm 145.4$} & \makecell{$723$ \\ $\pm515$} &\makecell{$2.6$ \\ $\pm1.0$} &\makecell{$0.11$ \\ $\pm0.038$} &\makecell{$0.94$\\ $\pm0.04$}\\ 
    \hline
  \end{tabular}
\end{table}

\subsection{Quality Metrics}\label{sec:qualitymetrics}

We validate our method using the metrics outlined below. 

\paragraph*{Boundary matching.} We measure fit to the target boundary using the Hausdorff distance $d_{\textrm{max}}$ and the chamfer distance $d_{\textrm{avg}}$ defined as follows:
\begin{align}
    d_{\textrm{max}} (M_1,M_2) &= \max\left\{\sup_{\mathbf x \in M_1} \inf_{\mathbf y \in M_2} d(\mathbf x,\mathbf y), \sup_{\mathbf y \in M_2} \inf_{\mathbf x \in M_1} d(\mathbf x,\mathbf y) \right\}
    \\
    d_{\textrm{avg}} (M_1,M_2) &= \frac{1} {\vert \mathcal{V}_1 \vert + \vert \mathcal{V}_2 \vert}\left[
    \sum_{\mathbf v_i \in \mathcal{V}_1} d(\mathbf v_i,M_2) + \sum_{\mathbf v_j \in \mathcal{V}_2}d(\mathbf v_j,M_1)
    \right].
\end{align}
Here, $\mathcal V_1$ and $\mathcal V_2$ denote the sets of vertices of $M_1$ and $M_2$, respectively. 
To make the measures above scale-independent, we normalize both quantities by the length of the diagonal of the bounding box enclosing the target mesh.  We use hats to denote normalized quantities:  $\hat{d}_{\max}$ and $\hat{d}_{\mathrm{avg}}$.

To visualize the distortion in the interiors of tetrahedral meshes, we use a mapped checkerboard pattern. In each map visualization, using Houdini, we slice the source shape with a plane and place an extrinsic checkerboard pattern on the intersection, using rounding and modulo operations on coordinates. We push forward the planar intersection surface through our map and render the result using a custom shader that looks back to the corresponding coordinate in the source and evaluates the checkerboard function.
Interpolation happens by finding the closest element (\texttt{xyzdist}) and then transferring coordinates (\texttt{primuv}).

\begin{figure}
    \includegraphics[width=\linewidth]{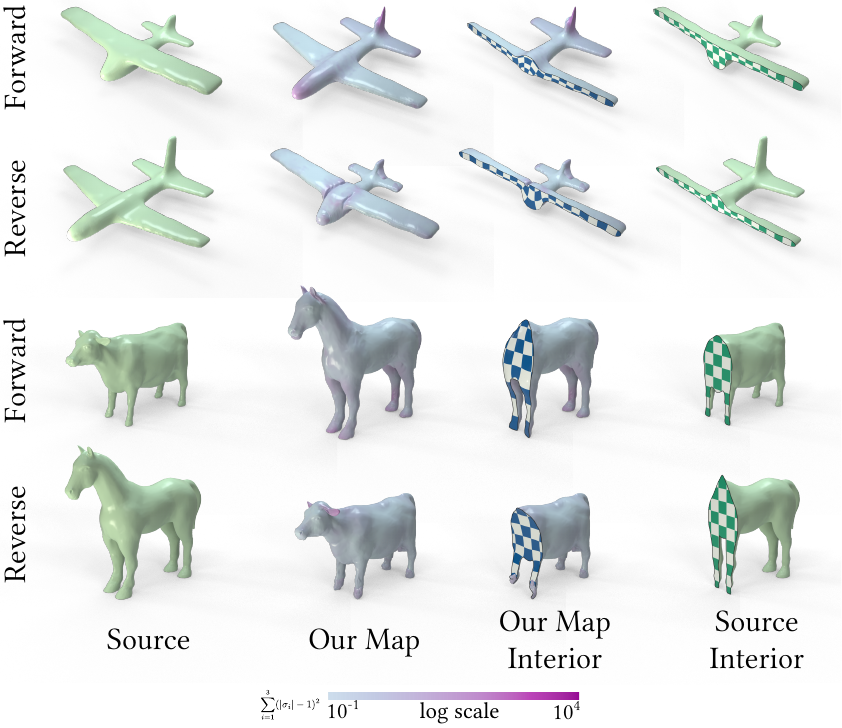}
    \caption{Forward and reverse maps on far-from-isometric shapes. Our maps capture the extreme deformations, for example by growing and collapsing the airplane rudder and  deforming the ears of the horse and cow pair. \resubb{Matching boundary features expectantly leads to high local distortion, as a large volume change is required to model these transformations. The checkerboard pattern reveals that regions with high boundary distortions also cause interior distortion (see airplane), but the computed maps are uniform and smooth elsewhere.}}
    \label{fig:checkernoniso}
\end{figure}

\paragraph{Distortion and inversion.} 
\resub{We measure the quality of the transformation by computing the number of inverted tetrahedra $(n_{inv})$} and the mean normalized Jacobian determinant $\det \hat{J}$ (weighted by tetrahedron volume), where the columns of $J$ are normalized as in~\cite{li2021interactive}. Figures containing qualitative results depict distortion per tetrahedron using the ARAP energy $\sum_{i=1}^3 \left( \vert \sigma_i \vert -1 \right)^2$.

\subsection{Datasets}\label{sec:datasets}
We evaluate our method on 24 mesh pairs from four datasets. For datasets where only triangle meshes are available, we tessellate the interiors. We randomly select pairs of shapes distorted non-isometrically from the SHREC19 dataset~\cite{dyke2019shrec}. We also randomly select matching and non-matching pairs of humans and animals for nonrigid correspondence from the TOSCA dataset~\cite{bronstein2008numerical}. Finally, we obtain tetrahedral meshes of models of natural objects and CAD models from~\cite{li2021interactive,Fu-2016-PC}, from Thingi10k~\cite{Thingi10K}, and from Thingiverse~\cite{thingi2022yahoo}. \resub{The resulting meshes had (mean$\pm$standard deviation) $50,010\pm34,663$ tetrahedra.}  We manually choose landmarks on the boundary surfaces for every mapping example (marked on most figures); \resub{Table~\ref{tab:all-stats-rhm-init} provides the number of landmarks and number of tetrahedra for each pair}.

\subsection{Validation}\label{sec:validation}
In this section, we demonstrate our maps on several pairs. 

\begin{figure}
    \includegraphics[width=\linewidth]{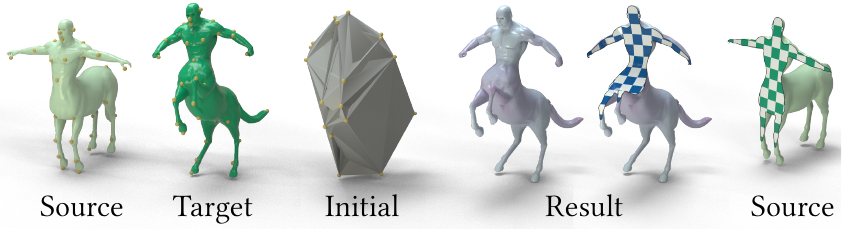}
    \caption{Resulting map when initialized using only a sparse set of landmark points. Despite an initialization that collapses the mesh to a set of landmarks, we produce a map that captures sharp geometric features of the target including the hands and bends of the legs. The distortion is smooth and uniform throughout the boundary and interior.}
    \label{fig:landmark}
\end{figure}

\resub{\paragraph*{Quantitative evaluation and map selection}Table~\ref{tab:map-comparison} measures performance of both sets of maps, $X_{ij}$ and $P_{ij}$, using surface map initialization and landmark initialization. Using the image of the map $X_{ij}$, we achieve close matchings to the target boundary with negligible tetrahedron inversions and while effectively maintaining tetrahedron quality. The landmark-based initialization achieves comparable performance, with slightly higher $\hat{d}_{max}$. These results indicate our method is robust to the choice of initialization. The constrained maps $P_{ij}$ have significantly higher tetrahedron inversion due to the constraint $P_{ij}\in\mathcal{P}_{ij}$, which results in boundary tetrahedron foldovers. Since the boundary matching metrics are comparable for both maps, we use $X_{ij}$ as the final map. The low number of tetrahedron inversions ($n_{inv}$) and small $E_R$ indicate the resultant maps are nearly inverses of one another.  Table~\ref{tab:all-stats-rhm-init} presents results for all pairs in our dataset.
}

\paragraph*{Algorithm flowchart} We demonstrate each step of our algorithm in Fig.~\ref{fig:flowchart}. First, we compute an initial boundary map using the method of \citet{ezuz2019reversible}. This initial map is interpolated from the boundary to the interior by mapping each interior vertex to the target of its closest boundary vertex, as described in \S\ref{sec:initialization}. This procedure results in a significant number of inverted or collapsed tetrahedra (Fig.~\ref{fig:flowchart}b). The interior is then improved by minimizing the map energy over the interior vertices (Fig.~\ref{fig:flowchart}c). Then, we repair inverted tetrahedra, dramatically reducing the number of flipped tetrahedra, as described in \S\ref{sec:repair}. The mapped mesh start to restore its source pose; the hands and feet rotate (Fig.~\ref{fig:flowchart}d). \resub{We compute the final map by optimizing over all vertices (Fig.~\ref{fig:flowchart}e) and then perform post-convergence tetrahedron repair, arriving at a solution that closely conforms to the target boundary while minimizing distortion (Fig.~\ref{fig:flowchart}f)}.

\resubb{Fig.~\ref{fig:optimization-graph} visualizes our optimization routine initialization with landmarks. A few intermediate shapes are demonstrated. Our algorithm quickly recovers the target shape and the optimization improves surface matching, and reduces boundary and interior distortion.}

\paragraph*{Map results} We demonstrate our method on several pairs.  Fig.~\ref{fig:checkeriso} shows the forward and reverse maps between pairs of deformations from the same domain. In both examples, distortion is smooth throughout the boundary, and our map successfully matches geometric features, for example the curved tail and the ears in the cat pairs. The checkerboard patterns demonstrate that our maps are smooth in the interior. 

\begin{figure}
    \includegraphics[width=\linewidth]{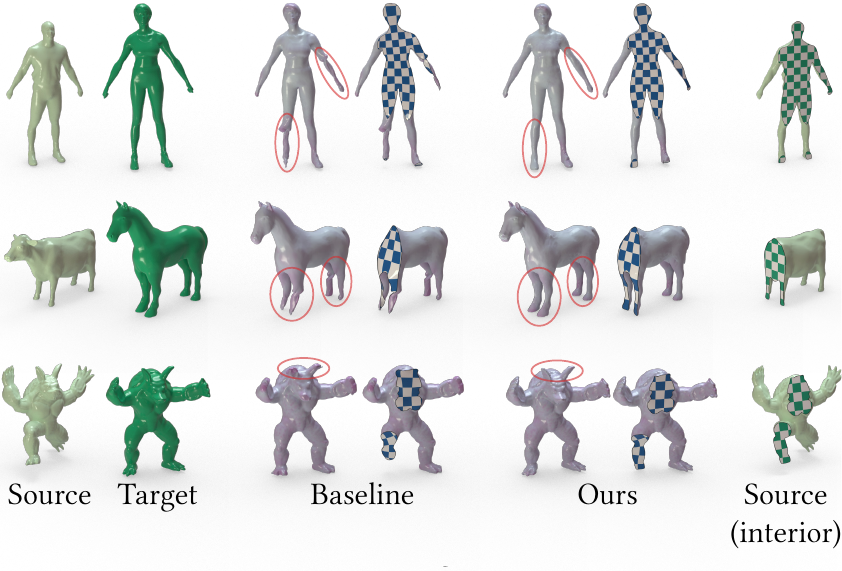}
    \caption{\resub{Comparison of our map with the baseline approach using $K=25$ with landmark equality constraints. Red ovals indicate distorted regions in the baseline where our method succeeds. Our approach effectively preserves geometric features and produces high quality maps.}}
    \label{fig:baseline}
\end{figure}

Fig.~\ref{fig:checkernoniso} shows results for the more challenging problem of mapping between pairs of shapes from different domains. \resubb{Distortions are mainly smooth on the boundary but are expectantly high in regions with large displacements, e.g., in the nose and rudder of the airplane in the forward direction. }Here, the volume of the nose has to shrink substantially while the rudder has to expand in height. Similarly, we see large distortion in the cow-horse pair, particularly in the ears in the reverse map and in the knees and feet in the forward map. \resub{Our boundary term yields maps that closely conform to the target at the cost of greater tetrahedral distortion}.

\resub{Fig.~\ref{fig:landmark} demonstrates our resultant map when initialized using a sparse set of landmark points (\S\ref{sec:initialization}, landmark-based initialization). While the initial map is unintelligible, our output matches the target shape closely. The final map has low distortion throughout the boundary and captures the narrow features of the target, including the fingers and bends in the legs. Furthermore, the checkerboard pattern reveals uniform distortion in the interior. }

\resub{
\subsection{Baseline Comparison}\label{sec:baseline-comparison}
We compare to the volumetric mapping approach of~\citet{kovalsky2015large}. Their method inputs a surface map with optimized interior and computes a similar map that is orientation-preserving with bounded condition number $K$. Linear equality constraints on the vertices are used to fix parts of the map. 

We compute the initial volumetric map by first computing a surface map as in \S\ref{sec:initialization} and then repairing degenerate tetrahedra by minimizing the Dirichlet energy while keeping the boundary fixed, as was done by~\citet{kovalsky2015large}. We test four different sets of equality constraints for extracting the final volumetric maps: (1) fixing the boundary map; (2) fixing the boundary map for vertices not in inverted tetrahedra; (3) fixing landmarks; and (4) preserving center of mass. We use conformality bound $K \in \{5,25,50,100\}$.}

\begin{table}[bp]
\scriptsize
  \caption{\resub{Map quality comparison to the baseline for $K=25$.} \vspace{-.15in}}
  \label{tab:map-comparison-baseline-k}
  \begin{tabular}[0.3\textwidth]{|clcccc|}
    \hline
        & Constraint & $n_{inv}$ & \makecell{$\hat{d}_{max}$\\($\times 10^{-2}$)} & \makecell{$\hat{d}_{avg}$\\($\times 10^{-2}$)} &$\det\hat{J}$  \\ \hline
        & \textbf{Ours} & $8\pm13.8$ & $\mathbf{2.35\pm1.45}$ & $0.097\pm0.05$ & $\mathbf{0.98\pm0.02}$ \\ \hdashline[1pt/1pt]
        \parbox[t]{1mm}{\multirow{4}{*}{\rotatebox[origin=l]{90}{\hspace{-.05in}\textbf Baseline}}} & Boundary & $2740\pm2210$ & $2.84\pm1.06$ & $\mathbf{0.085\pm0.049}$ & $0.82\pm0.17$ \\
        &Boundary (no flip) & $11.1\pm31.8$ & $7.9\pm8.5$ & $0.33\pm0.38$ & $0.89\pm0.1$ \\ 
        &Landmark & $1.8\pm3.2$ & $5.2\pm3.1$ & $0.7\pm0.26$ & $0.89\pm0.11$ \\ 
        &Center of mass & $\mathbf{1.7\pm2.6}$ & $7.2\pm0.58$ & $0.8\pm1.2$ & $0.89\pm0.11$ \\ \hline
    \end{tabular}
\end{table}

\begin{figure}
    \includegraphics[width=\linewidth]{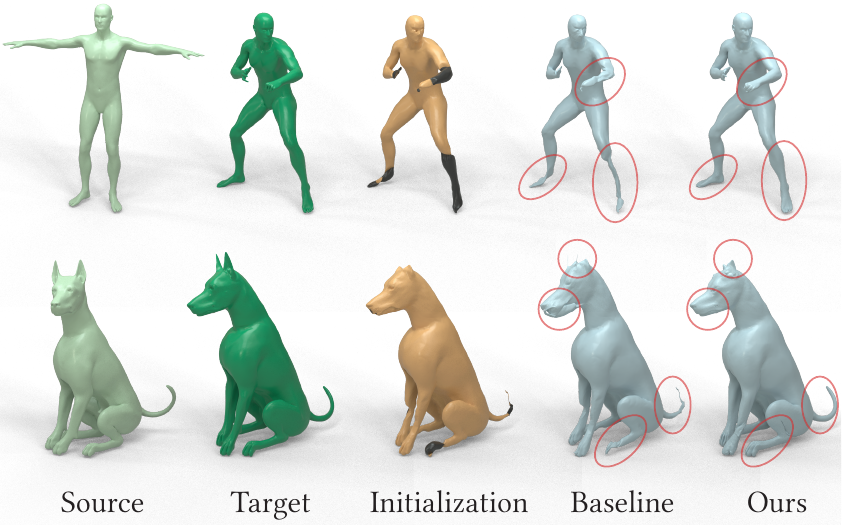}
    \caption{Refinement of the initial boundary map using~\cite{ezuz2019reversible} \resub{and comparison to the baseline with landmark equality constraints}. The backs of boundary triangles are shown in black. The initial map produces areas of the surface turned inside out and collapses regions like the hands of the human and tail of the dog. \resub{Both our method and the baseline can produce orientation-preserving correspondences}. \resub{Compared to the baseline, our approach restores collapsed and distorted regions and effectively matches the target shape (red ovals)}. This experiment also reveals that our method can recover from poor initialization. }
    \label{fig:danielle-comp}
\end{figure}

 \resub{Table~\ref{tab:map-comparison-baseline-k} compares map quality across the dataset for each equality constraint using $K=25$. Similar behavior arose for other values of $K$, so they are not shown. We compare with the matching forward maps from our method. The fixed boundary map results in comparably low $\hat{d}_{\textrm{max}}$, $\hat{d}_{\textrm{avg}}$ to our method, but with a significantly large number of flipped tetrahedra and poor map quality ($\det \hat{J}=0.82$) compared to our approach ($\det \hat{J}=0.98$). The strongest baseline uses the landmark equality constraints, resulting in improved $n_{inv}$, at the cost of map quality and boundary matching. 
 
 Fig.~\ref{fig:baseline} compares our map with the baseline using the fixed landmark constraint. Our method correctly maps features that are distorted by the baseline, such as the arm and leg of the human and hooves of the horse. The baseline approach performs well on the armadillo, a map between shapes of the same domain, but produces higher distortion. These visual and quantitative results demonstrate the strength in our free-boundary formulation, which effectively matches geometric features.}


 \begin{figure}
    \includegraphics[width=\linewidth]{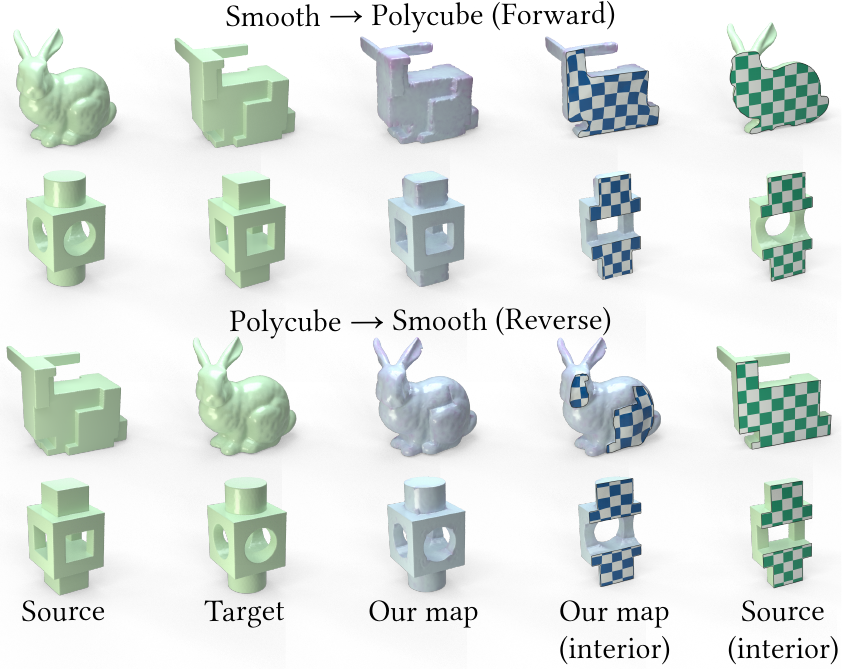}
    \caption{\resub{Map between smooth and polycube shapes. Our method produces close matchings in both directions, though higher distortion arises in the corner regions of the polycube. }}
    \label{fig:polycube}
\end{figure}

\paragraph*{Surface map repair} 
Fig.~\ref{fig:danielle-comp} shows how our algorithm recovers artifacts in the 2D surface map initialization procedure (\S\ref{sec:initialization}) and compares with the baseline using the fixed landmark constraint. 
Starting from our landmarks, \cite{ezuz2019reversible} results in parts of the surface that are folded inside out (the backs of triangles are shown in black), as seen on the arms and legs of the human and the paws of the dog; the initial maps also have collapsed boundary features (hand of the human, tail of the dog). \resub{ Both our method and the baseline target orientation-preserving maps and correct these inverted areas. Unlike the baseline, our method recovers from the inverted regions to match the target shape. Furthermore, we fill small regions such as the tail of the dog and the hands and feet of the human. }

\begin{figure}[b]
    \includegraphics[width=\linewidth]{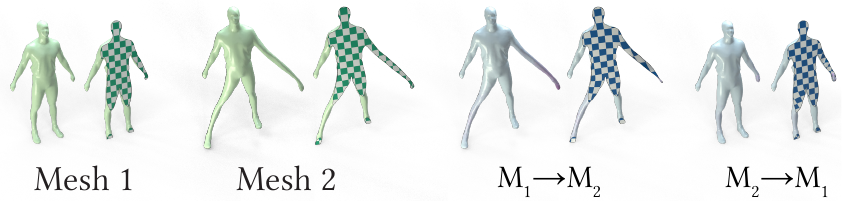}
    \caption{\resub{Nonisometric mapping of a stretched human. Close matchings are obtained, though higher distortion arises in the stretched regions. }}
    \label{fig:human-stretch}
\end{figure}

\begin{figure}[t]
    \includegraphics[width=0.9\linewidth]{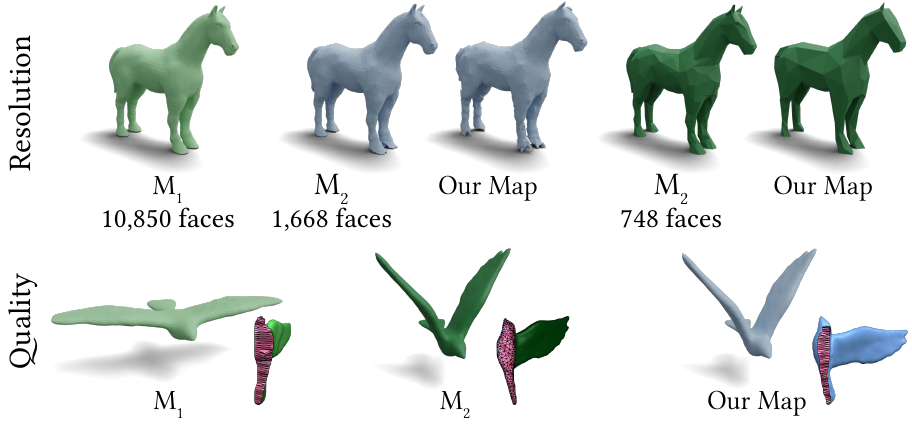}
    \caption{\resubb{Map sensitivity to mesh quality. Top: mapping a high-resolution horse mesh to progressively downsampled versions (boundary triangle faces indicated). Bottom: mapping a bird with thin, elongated tetrahedra faces to one with regular tetrahedra. In all cases, the targets are matched closely with few inversions (maximum of $n_{inv}=2$), though in the horse examples, small geometric features, such as the ears, are lost due to limited representation. }}
    \label{fig:coarse}
\end{figure}

\resub{
\subsection{Map Robustness}
\label{sec:robust}
We test the robustness of our method on challenging mapping cases. We first assess the ability to map from smooth, high-resolution shapes to coarse meshes with sharp features. Fig.~\ref{fig:polycube} demonstrates mapping to polycube shapes from~\cite{Fu-2016-PC}, using the $P_{ij}$ maps. We successfully map bidirectionally between the smooth and coarse shapes, although expectantly higher distortion arises in the corner regions. 

Fig.~\ref{fig:human-stretch} tests matching between nonisometric pairings. \resub{We stretch one arm and leg of the human mesh and obtain close matchings in both directions, although higher distortion arises at the ends of the stretched regions due to large changes in volume required to match to the target.}

\resubb{Fig.~\ref{fig:coarse} tests the robustness of our method to mesh quality. Fig.~\ref{fig:coarse} (top)  maps a high-resolution horse to progressively downsampled versions. Despite differences in mesh resolution, we successfully map to the target shapes \resubb{with minimal inversions}, although small features like the ears of the horse are distorted. This artifact is due to few tetrahedra representing these regions in the downsampled mesh. Fig.~\ref{fig:coarse} (bottom) assess the sensitivity of our method to mesh quality by mapping a bird with thin, elongated tetrahedra faces to one with regular tetrahedra. We achieve a close matching, suggesting our method is robust to mesh quality.  }

}

\begin{figure}[bp]
    \includegraphics[width=\linewidth]{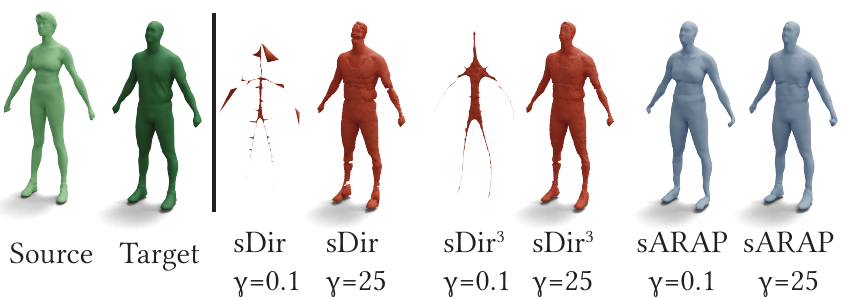}
    \caption{Comparison of maps when optimizing with the symmetrized Dirichlet (sDir), \resub{the \nth{3}-order sDir$^3$}, and the sARAP energies.  sDir and \resub{sDir$^3$} produce collapsed maps for both values of $\gamma$, although $\gamma=25$ keeps parts of the map intact as it pushes vertices to the boundary. The sARAP energy does not collapse, but starts to show the source shape for $\gamma=0.1$, as expected.}
    \label{fig:arapanddirichlet}
\end{figure}

\subsection{Symmetrized Energy Choice}\label{sec:symmetrized_energy_choice}
We experiment with the choice of symmetrized energy and its effect on producing a map. \resub{As described in \S\ref{sec:symm-design}, several symmetrized energies do not favor isometry while our choice, the sARAP energy, does. Fig.~\ref{fig:arapanddirichlet} compares the output when optimizing using the sARAP, the symmetrized Dirichlet (sDir), and the \nth{3}-order symmetrized Dirichlet (sDir$^3$) energies. The \nth{3}-order Dirichlet is used since tri-harmonic functions are used to achieve $C^1$ continuity in 3D~\cite{iwaniec2010deformations}}. In these experiments, we remove the tetrahedron repair step, which made the artifacts \resub{worse}. We compare two choices of $\gamma$ and visualize the resultant maps. 

Both the sDir and \resub{sDir$^3$} energy completely collapse the map for $\gamma=0.1$, since the projection term has little effect at keeping the map intact. Similarly, parts of the mapped mesh degenerate with $\gamma=25$. \resub{In both cases, the sDir$^3$ energy however maintains continuity}. In contrast, the sARAP energy does not produce a collapsed map, although it starts to restore the source when $\gamma=0.1$.

This experiment verifies our analysis in \S\ref{s:symm-energy} and additionally shows that methods using energies that do not favor isometry can be sensitive to parameter choice.

\section{Examples}

Volumetric maps are useful for transporting data between domains. Below, we depict some use cases that would benefit from our low-distortion, near-diffeomorphic maps.

\begin{figure}
    \includegraphics[width=\linewidth]{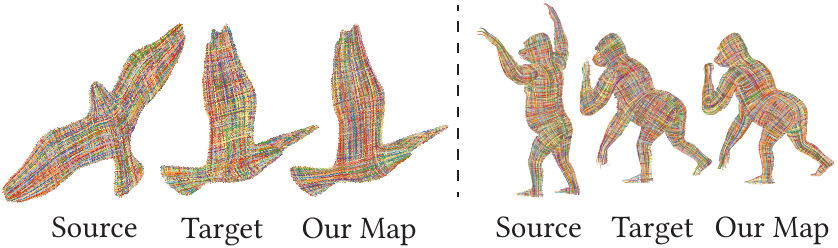}
    \caption{When the integral curves of an octahedral frame field are pushed forward from a source domain (left) to a target domain (right), the result looks similar to the integral curves of a field computed directly on the target (center). The mapped curves remain nearly orthogonal, illustrating the low metric distortion of our map.}
    \label{fig:singularitycurves}
\end{figure}

\subsection{Internal geometry transfer}

In contrast to pulling back functions on $M_2$ to $M_1$, we can also push forward maps into $M_1$ to $M_2$. This category of data includes point clouds, collections of curves, and arbitrary subdomains $U \subset M_1$.

As an example of how data can be easily transported using our maps, in \figref{singularitycurves} we push forward integral curves of a frame field on domain $M_1$ through $\phi: M_1 \to M_2$. The frame fields and their integral curves were generated using ARFF \cite{palmer2020}. 
Integral curves were pushed forward by mapping the curve vertices individually using piecewise linearity. The integral curves remain nearly orthogonal under the map, showing that it is close to isometric.
\begingroup
\setlength{\intextsep}{1pt}%
\setlength{\columnsep}{6pt}%
\begin{wrapfigure}{r}{0.18\textwidth}
  \centering
    \includegraphics[width=0.17\textwidth]{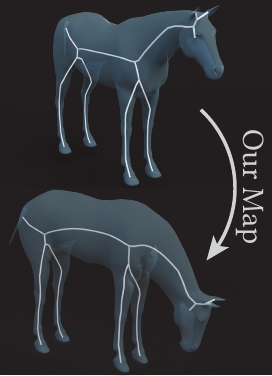}
  \caption{\resub{Internal curve-skeleton transfer.}}
  \label{fig:horse-skeleton}
\end{wrapfigure}
The pushed-forward integral curves closely match the integral curves computed directly on $M_2$, also reflecting the map's degree of metric preservation.

In \resub{another} example, we simulate an internal geometry transfer task. As shown in Fig.~\ref{fig:organs}, we place several objects representing anatomy inside of our source mesh and push these forward to our target. Despite rotation of the head and movement of the arm, structure is largely maintained. For the meshes used in this example we credit~\cite{urban2015heart,yeg3d2015brain,jim2018eye,prevue2013spine,webnode2017bone}.

\resub{In a final example, we transfer a curve-skeleton of a horse mesh to our target (Fig.~\ref{fig:horse-skeleton}). The source skeleton is generated using the approach of~\citet{cao_smi10}. The transferred skeleton captures the deformation of the horse, as evidenced by the curvature of the spine. Previous work has proposed skeleton transfer by finding a rigid transformation between skeletons of two surface meshes~\cite{seylan20193d}. In contrast, our volumetric approach facilitates internal geometry transfer and does not require computing matchings of internal shapes. }

\subsection{Hex mesh transfer}

\begin{figure}
    \includegraphics[width=\linewidth]{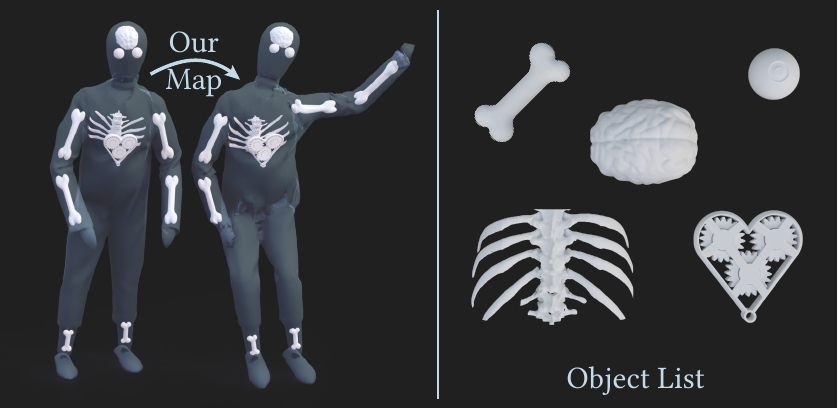}
    \caption{Internal geometry transfer. We place several objects representing human anatomy in the interior of our source mesh and push these forward to the target using our volumetric map.}
    \label{fig:organs}
\end{figure}

Our maps can transport other volumetric structures. Hexahedral meshing remains difficult and often requires extensive human intervention; our maps can  transport expensive-to-compute hex meshes between domains. \figref{hexmesh} transports a hexahedral mesh designed using the method of \citet{li2021interactive} on one domain to a deformed domain. Similar to how we push forward integral curves, we transport a hex mesh by mapping its vertices individually, maintaining the combinatorial structure of the mesh. Due to the low metric distortion of the map, the distortion of most of the hexahedra remains low, as measured by the scaled Jacobian. However, the right foot of the mapped hex mesh has two toes joined together. This artifact is caused by projection to the wrong boundary target, an artifact also encountered by~\citet{li2021interactive}; as their approach has user interaction, they suggest adding landmarks during the optimization to clarify difficult targets. 

\subsection{Volumetric data transfer}

We demonstrate one example of volumetric data transfer using a dataset of placentas extracted from fetal MRI~\cite{abulnaga2021volumetric}. The mapping is done on data from two patients. The first mapped pair contains two scans acquired where the mother is lying in two positions: supine and left lateral. The second contains two scans acquired $\sim\!10$ minutes apart.  \resub{Fig.~\ref{fig:plac} presents the results}. The figure marks one important anatomical landmark, a cotyledon, which is responsible for the exchange of blood from the maternal side to the fetal side~\cite{benirschke1967pathology}. Cotyledons appear as hyperintense circular regions in MRI. We observe close correspondence in the placental geometry. Similar patterns are seen in the mapped texture and the target. \resub{In this application, neither example has a clearly defined source or target shape. The symmetry in our method is advantageous for downstream tasks, such as statistical shape analysis or label propagation, as it prevents bias caused by arbitrarily selecting a source and a target.}
We leave to future work a detailed study of our method's relevance to MRI data.

\begin{figure}
    \includegraphics[width=\linewidth]{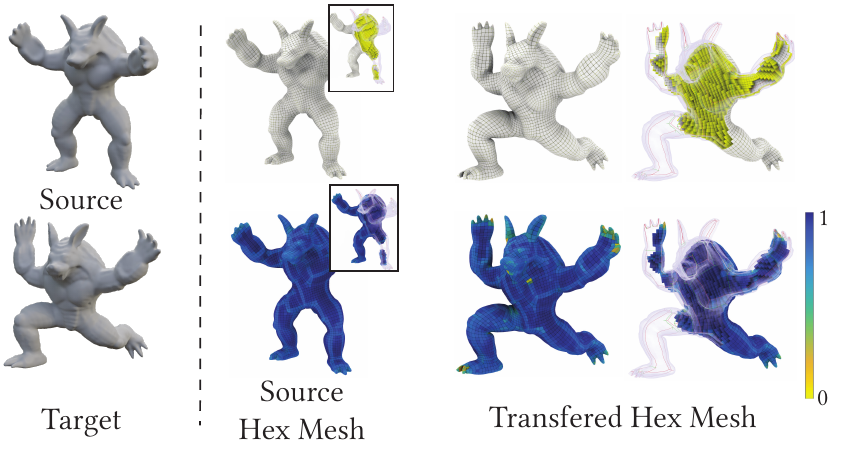}
    \caption{Hex mesh pushed forward from one volume to another using our map.
    We observe low distortion, as measured by the scaled Jacobian overall, but there is some distortion in the mapped right foot.
    Hex meshes are visualized with HexaLab \cite{bracci2019hexalab}, which clamps negative scaled Jacobian values to 0.\vspace{-.1in}}
    \label{fig:hexmesh}
\end{figure}
\section{Discussion}

We successfully map a collection of shapes of diverse geometry and demonstrate that our maps closely match the target boundary with low distortion throughout the volume \resub{and a negligible amount of flipped tetrahedra}. \resubb{Our method is robust to the choice of initialization (Figs.~\ref{fig:optimization-graph},~\ref{fig:landmark}, and~\ref{fig:danielle-comp}) and can produce a dense correspondence even when starting with a low-quality, many-to-one map (Fig.~\ref{fig:optimization-graph} and~\ref{fig:landmark})}. \resub{Compared to the baseline, our free boundary-based approach can recover from poor initialization (Fig.~\ref{fig:danielle-comp} and produce higher quality maps as shown in Table~\ref{tab:map-comparison-baseline-k} and Fig.~\ref{fig:baseline})}. Our examples illustrate scenarios that require a volumetric correspondence, namely internal geometry transfer, hex mesh transfer, and volumetric data transfer. 

Key to the development of our algorithm was the analysis of symmetric distortion energies in \S\ref{s:symm-energy}-\ref{sec:symm-design}. We symmetrized several common distortion energies and found that only the sARAP energy had the desirable properties of favoring isometry, preserving structure, and being nonsingular. We provide a simple way to symmetrize a distortion energy and check if it satisfies these properties. Fig.~\ref{fig:arapanddirichlet} also shows that some choices  of energy can lead to degenerate maps that are sensitive to the parameters used. 
The nonsingularity of the sARAP energy is favorable for computing a map given a degenerate initialization. Since volumetric correspondence has no obvious initializer, this property is key in our target applications, as we can recover from poor initialization.
Future work remains in designing symmetric distortion energies that satisfy more desirable properties.

The connection between the theoretical analysis in \S\ref{s:symm-energy}--\ref{sec:symm-design} to our algorithm design relies on $\psi = \phi^{-1}$. We use soft constraints to encourage a bijection and produce maps with low reversibility energy ($E_R=(1.47\pm1.9) \times 10^{-3}$) and few flipped tetrahedra ($7.7 \pm 9.9$).
In practice, we cannot guarantee $\psi = \phi^{-1}$ as our initialization is non-invertible and the existence of an invertible map is not guaranteed.
However, our experimental results demonstrate the theoretical analysis is relevant, as our computed maps favor isometries ($\det \hat{J} =0.98\pm0.02$) and do not collapse (Fig.~\ref{fig:arapanddirichlet}). It remains an open problem to guarantee $\psi=\phi^{-1}$. 

\begin{figure}[t]
    \centering
    \includegraphics[width=\linewidth]{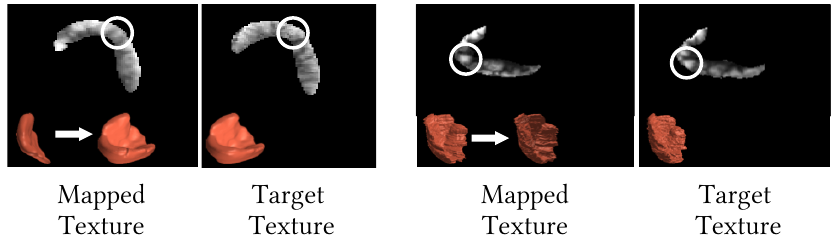}
    \caption{Volumetric data transfer of two fetal MRI volumes visualized as cross-sections of 3D MRI. 
    The figure shows texture transfer between two volumes in a scenario where the mother is lying in the supine and left lateral position \emph{(left)}, and in a scenario where the two volumes are approximately 10 minutes apart \emph{(right)}.
    The circle marks the location of a cotyledon in the target texture.}
    \label{fig:plac}
\end{figure}

\subsection{Limitations}

\begin{figure}[t]
    \includegraphics[width=\linewidth]{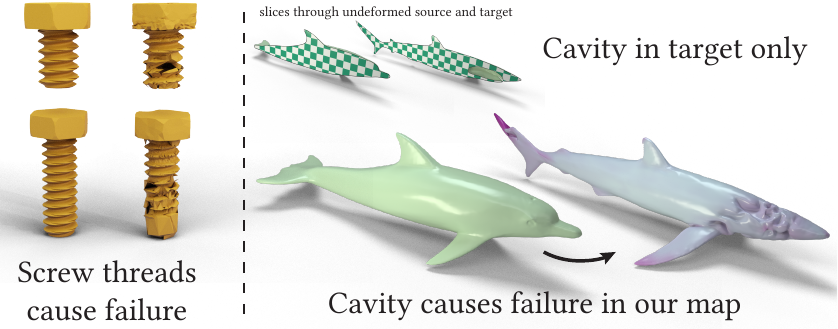}
    \caption{Limitations. We were unable to map between the screw threads, as the map required removing or adding a large amount of material, leading to significant distortion. In our second example, the target shape, a shark, had a large cavity in its interior, while the source, a dolphin, did not.}
    \label{fig:limitations}
\end{figure}

We observed a few failure cases as can be seen in Fig.~\ref{fig:limitations}. First, we encountered shapes where finding a volumetric map was simply infeasible. In the screw threads example, the required map would have to add or remove a large amount of volume, which would lead to substantial distortion. Furthermore, the threads on the boundary differ in number, making it impossible to match sharp features. In the second case, we were unable to map a shark with a cavity in its interior to a dolphin with a solid interior. The cavity is a large hollow area to which a volumetric approach is highly sensitive. \resubb{Furthermore, our method is unable to change topology when mapping between shapes of different genus (Fig. \ref{fig:limitation-topology}) and  we are unable to prescribe topological constraints.} \resub{Another limitation is that our method may not be suitable for partial volume matching, since we normalize input meshes to have volume $1$.} Last, as demonstrated in Fig.~\ref{fig:hexmesh}, our method can join together small features in the boundary (e.g., armadillo toes).
This artifact is caused by an incorrect boundary projection. A potential fix would be to have soft landmark constraints in the optimization.
\begingroup
\setlength{\intextsep}{6pt}%
\setlength{\columnsep}{12pt}%
\begin{wrapfigure}{r}{0.18\textwidth}
  \centering
    \includegraphics[width=0.17\textwidth]{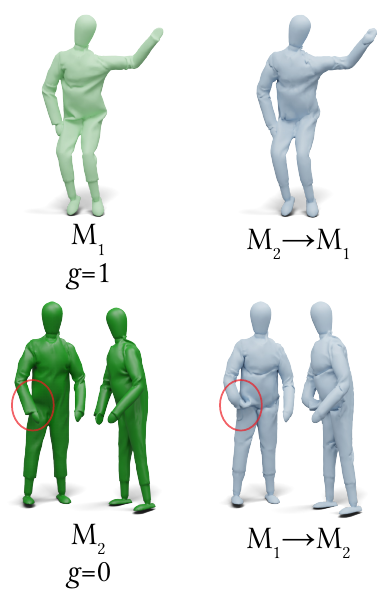}
  \caption{\resubb{Highly distorted region (red circle) when mapping from a genus-1 to a genus-0 shape.}}
  \label{fig:limitation-topology}
\end{wrapfigure}

Finally, our method takes between minutes and hours to compute the correspondences. \resubb{The computational cost is problematic if desiring mapping a collection of shapes, despite our algorithm being advantageous in that we can map between shapes that are far-from-isometries, and we do not require the same connectivity between shapes}. The computational bottleneck is computing the SVD for each tetrahedron many times on the CPU to approximate the gradient of the objective function. A future direction is to improve the convergence time by using a second-order method and to use the GPU for parallelization.

\subsection{Future Work}

An exciting future direction is to develop application-specific volumetric correspondences. We provided a few examples of tasks where volumetric correspondence is useful. Our example of mapping MRI signals demonstrated that while matching geometries can improve correspondence, a method that incorporates both the geometry and signal intensities is needed. \resubb{One framework could be to combine our vertex-based approach with functional maps. }

We envision this work to be a starting point for dense volumetric correspondence applicable to a broad set of shapes. The nascent area of volumetric correspondence is largely unexplored, and our theoretical discussion suggests many intriguing mathematical questions and algorithmic design challenges.

\begin{acks}
This work is funded in part by NIH NIBIB NAC P41EB015902, NIH NICHD R01HD100009, Wistron Corporation, Army Research Office grants W911NF2010168 and W911NF2110293, of Air Force Office of Scientific Research award FA9550-19-1-031, of National Science Foundation grants IIS-1838071 and CHS-1955697, from the CSAIL Systems that Learn program, from the MIT--IBM Watson AI Laboratory, from the Toyota--CSAIL Joint Research Center, from a gift from Adobe Systems, the Swiss National Science Foundation’s Early Postdoc.Mobility fellowship, the NSF Graduate Research Fellowship, the NSERC Postgraduate Scholarship -- Doctoral, and the MathWorks Fellowship.

The authors would also like to thank David Palmer for insightful discussions and help with the frame field and hex mesh transfer experiments. We are grateful to Lingxiao Li for his help in extending the CUDA code for tetrahedron projection in $\R^6$. We would like to thank Shahar Kovalsky for his help in setting up the baseline code and Melinda Chen for her help in generating the curve-skeleton. Finally, we are grateful to Yu Wang, Zo\"{e} Marschner, Artem Lukoianov, and Ishita Goluguri for their help in proof reading.
\end{acks}


\bibliographystyle{ACM-Reference-Format}
\bibliography{main}

\appendix
\label{appendix:table}
\begin{table*}[b]
    \scriptsize
    \centering
        \caption{\resub{Quantitative results on all mesh pairs in our dataset. Our maps closely match the target boundaries while producing low distortion and few tetrahedron inversions. Here, $\ell$ denotes the number of landmarks. Time is measured in minutes.}}
            \label{tab:all-stats-rhm-init}
   
     \begin{tabular}{|l|l|r|r|r|r|r|r|r|r|r|r|r|r|r|r|r|r|}
    \hline
\multicolumn{2}{|l|}{Names}  & $\ell$ & Time & \multicolumn{2}{c|}{$|\mathcal{T}_i|$} & \multicolumn{2}{c|}{$E_{R}  (\times 10^{-3})$} & \multicolumn{2}{c|}{$E_{ARAP}  (\times 10^{-3})$} &\multicolumn{2}{c|}{$n_{inv}$} & \multicolumn{2}{c|}{${\hat d}_{\max}  (\times 10^{-2})$} & \multicolumn{2}{c|}{${\hat d}_{\mathrm{avg}} (\times 10^{-2})$} & \multicolumn{2}{c|}{$|\det \hat J|$} \\ \hline
                scan\_011 & scan\_019 & 23 & 29.23 & 36420 & 43527 & 2.23 & 1.52 & 55.89 & 64.91 & 3 & 11 & 1.47 & 7.82 & 0.12 & 0.24 & 0.98±0.05 & 0.974±0.0545 \\ \hline
        scan\_011 & scan\_030 & 19 & 11.39 & 36420 & 37588 & 0.12 & 0.11 & 43.18 & 43.22 & 2 & 0 & 1.83 & 2.52 & 0.10 & 0.12 & 0.981±0.0372 & 0.984±0.0275 \\ \hline
        scan\_019 & scan\_039 & 16 & 17.57 & 43527 & 50713 & 0.30 & 0.32 & 55.87 & 58.49 & 0 & 2 & 1.57 & 2 & 0.12 & 0.14 & 0.976±0.0503 & 0.979±0.0456 \\ \hline
        airplane1 & airplane2 & 7 & 28 & 24894 & 30700 & 1.29 & 3.02 & 257.28 & 174.17 & 8 & 12 & 2.33 & 2.34 & 0.10 & 0.12 & 0.968±0.0773 & 0.954±0.0979 \\ \hline
        armadillo & deformed armadillo & 21 & 75.89 & 81114 & 113794 & 0.11 & 0.11 & 25.40 & 28.22 & 2 & 3 & 1.15 & 1.39 & 0.05 & 0.06 & 0.99±0.0244 & 0.988±0.0259 \\ \hline
        244.1.ele & 248.1.ele & 12 & 19.55 & 30361 & 90810 & 0.17 & 0.18 & 45.16 & 44.36 & 0 & 7 & 1.23 & 1.15 & 0.06 & 0.08 & 0.987±0.0278 & 0.984±0.0318 \\ \hline
        cat0 & cat1 & 17 & 9.86 & 17867 & 22988 & 0.51 & 0.45 & 40.33 & 51.52 & 0 & 1 & 1.98 & 1.73 & 0.08 & 0.08 & 0.984±0.0479 & 0.981±0.0467 \\ \hline
        cat4 & cat5 & 18 & 14.05 & 25985 & 22710 & 0.92 & 1.02 & 52.71 & 56.03 & 5 & 5 & 3.92 & 2.89 & 0.13 & 0.11 & 0.98±0.0498 & 0.98±0.0496 \\ \hline
        centaur0 & centaur1 & 37 & 13.70 & 30357 & 26954 & 0.42 & 0.38 & 26.06 & 29.56 & 1 & 3 & 1.04 & 1.42 & 0.05 & 0.07 & 0.993±0.0212 & 0.988±0.0299 \\ \hline
        dancer & dancer2 & 13 & 43.11 & 58535 & 36902 & 7.32 & 4.06 & 268.53 & 287.04 & 41 & 17 & 1.32 & 2.18 & 0.12 & 0.11 & 0.934±0.131 & 0.951±0.0909 \\ \hline
        dog4 & dog5 & 27 & 17.04 & 31469 & 30160 & 1.58 & 2 & 61.45 & 53.25 & 7 & 3 & 2.29 & 2.27 & 0.09 & 0.10 & 0.979±0.0527 & 0.98±0.0538 \\ \hline
        dog6 & dog7 & 24 & 33.92 & 26739 & 43771 & 3.87 & 4.34 & 137.22 & 121.09 & 3 & 14 & 7.01 & 2.38 & 0.20 & 0.17 & 0.961±0.0754 & 0.958±0.0863 \\ \hline
        dog7 & dog8 & 25 & 71.15 & 81145 & 85128 & 0.34 & 0.35 & 22.03 & 22.51 & 4 & 9 & 2.86 & 4.78 & 0.08 & 0.12 & 0.992±0.031 & 0.992±0.0286 \\ \hline
        Dolphin & Shark & 9 & 39.75 & 129443 & 55440 & 2.51 & 2.90 & 123.05 & 80.68 & 11 & 13 & 3.96 & 5.32 & 0.14 & 0.16 & 0.981±0.0576 & 0.981±0.0592 \\ \hline
        dragon\_stand & dragonstand2 & 28 & 70.37 & 109823 & 194651 & 0.01 & 0.01 & 0.60 & 0.32 & 0 & 0 & 0.75 & 0.95 & 0.02 & 0.02 & 1±0.002 & 1±0.003 \\ \hline
        fish1 & fish2 & 8 & 73.57 & 64410 & 58215 & 3.07 & 2.07 & 149.62 & 179.76 & 55 & 16 & 2.10 & 2.09 & 0.16 & 0.12 & 0.969±0.0872 & 0.969±0.0763 \\ \hline
        glass1 & glass2 & 13 & 12.83 & 30921 & 13439 & 7.28 & 7.81 & 273.30 & 283.75 & 24 & 0 & 1.85 & 1.33 & 0.22 & 0.10 & 0.918±0.122 & 0.89±0.107 \\ \hline
        gorilla1 & gorilla5 & 26 & 30.52 & 37417 & 59375 & 1.26 & 1.24 & 33.04 & 58.37 & 1 & 4 & 4.05 & 2.76 & 0.09 & 0.10 & 0.988±0.0326 & 0.978±0.0448 \\ \hline
        horse0 & horse5 & 16 & 20.55 & 31507 & 34978 & 0.23 & 0.21 & 31.10 & 35.11 & 0 & 1 & 2.16 & 2.37 & 0.05 & 0.06 & 0.989±0.0414 & 0.99±0.027 \\ \hline
        Cow\_t & Horse\_t & 21 & 20.09 & 31694 & 32515 & 0.58 & 1.16 & 117.74 & 129.02 & 17 & 18 & 2.25 & 3.43 & 0.13 & 0.23 & 0.978±0.0515 & 0.969±0.0641 \\ \hline
        human1 & human2 & 21 & 42.27 & 56550 & 82581 & 0.57 & 0.87 & 58.34 & 60.89 & 0 & 33 & 1.67 & 2.09 & 0.06 & 0.10 & 0.988±0.0267 & 0.985±0.0428 \\ \hline
        michael0 & michael7 & 20 & 19.53 & 19445 & 30014 & 0.40 & 0.35 & 21.62 & 27.28 & 1 & 4 & 1.75 & 1.46 & 0.06 & 0.07 & 0.992±0.0221 & 0.991±0.0269 \\ \hline
        seahorse2 & seahorse4 & 22 & 8.24 & 13720 & 15667 & 0.09 & 0.11 & 16.89 & 17.14 & 1 & 1 & 1.10 & 1.91 & 0.04 & 0.04 & 0.993±0.0227 & 0.993±0.0213 \\ \hline
        toy1 & toy2 & 12 & 31.54 & 75236 & 62880 & 0.43 & 0.47 & 47.36 & 53.13 & 5 & 3 & 4.74 & 3.66 & 0.08 & 0.07 & 0.987±0.0345 & 0.979±0.05 \\ \hline
            \end{tabular}
\end{table*}

\end{document}